\newtheorem{theorem}{Theorem}
\newtheorem{lemma}{Lemma}
\newtheorem{definition}{Definition}
\newtheorem{proposition}{Proposition}
\newcommand{\Cactive}{C_{\mathit{active}}}
\newcommand{\Cinactive}{C_{\mathit{inactive}}}
\newcommand{\Lit}[1]{\mathit{Lit}[#1]}
\newcommand{\Cl}[2]{\mathit{Cl}[#1,#2]}
\newcommand{\idxtolit}[1]{\operatorname{l}(#1)}
\newcommand{\PQ}{\emph{long--short}\xspace}
\newcommand{\QP}{\emph{short--long}\xspace}
\begin{document}

\title{Scheduling with two non-unit job lengths is NP-complete%\thanks{Grants or other notes
%about the article that should go on the front page should be
%placed here. General acknowledgments should be placed at the end of the article.}
}
%\titlerunning{Short form of title}        % if too long for running head

%\author{Jan Elffers \and Mathijs de Weerdt}

%\authorrunning{Short form of author list} % if too long for running head
%\begin{comment}
\author{Jan Elffers, 
              KTH Royal Institute of Technology, Stockholm, Sweden \\
              elffers@kth.se           %  \\
%             \emph{Present address:} of F. Author  %  if needed
           \and
           Mathijs de Weerdt, Delft University of Technology, Delft, the Netherlands \\
              m.m.deweerdt@tudelft.nl
}
%\end{comment}
%\date{Received: date / Accepted: date}
% The correct dates will be entered by the editor

\maketitle

\begin{abstract}
The non-preemptive job scheduling problem with release times and deadlines on a single machine is fundamental to many scheduling problems.
We parameterize this problem by the set of job lengths the jobs can have.
The case where all job lengths are identical is known to be solvable in polynomial time.
We prove that the problem with two job lengths is NP-complete,
except for the case in which the short jobs have unit job length, which was already known to be efficiently solvable.
The proof uses a reduction from satisfiability to an auxiliary scheduling problem that includes a set of paired jobs that each have both an early and a late deadline, and of which at least one should be scheduled before the early deadline.
This reduction is enabled by not only these pairwise dependencies between jobs, but also by dependencies introduced by specifically constructed sets of jobs which have deadlines close to each other.
The auxiliary scheduling problem in its turn can be reduced to the scheduling problem with two job lengths by representing each pair of jobs with two deadlines by four different jobs.
%\keywords{Scheduling \and Two job lengths \and NP-complete \and Satisfiability}
% \PACS{PACS code1 \and PACS code2 \and more}
% \subclass{MSC code1 \and MSC code2 \and more}
\end{abstract}

\section{Introduction}
The problem considered in this paper is the non-preemptive job scheduling problem with release times and deadlines.
In the three-field notation, this problem is denoted as $1|r_i|L_{\max}$.
In this offline scheduling problem, there is a set of jobs, each having a release time, a deadline and a processing time,
that need to be scheduled on a single machine.
The goal is to schedule the jobs without preemption such that no job starts before its release time and no job completes much later than its deadline.
The formal definition of the decision variant is as follows.
%The problem with the $L_{\max}$ criterion is known to be solvable using the decision version of the problem, 
%We consider the decision version of the problem, which is the question: does there exist a schedule without late jobs?
%The job scheduling problem can then be defined as follows.
\begin{definition}[Non-preemptive single machine scheduling with release times and deadlines]
Given a single machine and a set of jobs $J = \{([r_i,d_i],p_i)\mid i=1,\ldots,n\}$,
where $r_i, d_i \in \mathbb{Z}$ are the job's release time and deadline, together forming the job's availability interval $[r_i,d_i]$, and $p_i \in \mathbb{N}$ is the job's processing time.

\textbf{Question:} does there exist a \emph{schedule}, that is, an assignment of start times $t : \{1,\ldots,n\} \to \mathbb{R}$
to the jobs, such that $r_i \leq t(i) \leq d_i - p_i$ for all $i=1,\ldots,n$, and the set of execution intervals~$\{[t(i),t(i)+p_i) \mid i=1,\ldots,n\}$ are pairwise disjoint?
\end{definition}

The $L_{\max}$ optimization criterion asks to minimize the maximum lateness, that is, the maximum difference in time between a job's completion time and its deadline.

The problem is NP-complete by an easy reduction from 3-Partition~\citep{book_scheduling_pinedo},
but branch and bound algorithms work well in practice, at least on certain distributions of randomly generated instances of up to 1000 jobs~\citep{Carlier198242}.

We study a parameterized version of the problem, which has the set of job lengths~$P(S) = \{p_i \mid ([r_i,d_i],p_i) \in S\}$ as a parameter.
The case of unit job lengths ($p_i=1$) is solved by the greedy Earliest Due Date (EDD) algorithm
and the general case of identical job lengths ($p_i=p$) is also solvable in polynomial time~\citep{DBLP:journals/siamcomp/GareyJST81}.
The generalization to~$p_i\in\{1,p\}$ can be solved using a linear programming formulation~\citep{DBLP:conf/esa/Sgall12}.
This approach computes a sequence of starting times such that the length-$p$ can be assigned to these starting times,
with additional constraints that guarantee sufficient idle time for the unit length jobs.
These additional constraints simply provide a lower bound on the amount of idle time left by the schedule of the long jobs over a number of intervals, as if the unit length jobs were preemptive.
This is sufficient because release times, deadlines and processing times are integers so by discretization starting times can always be integers.
If the smaller job length is non-unit, this does not work anymore.
For the case~$\{2q,q\}$, a branch-and-bound algorithm specifically for two job lengths is claimed to run in pseudo-polynomial time~\citep{DBLP:journals/anor/Vakhania04}.

The complexity of the multi-machine version was studied by~\cite{DBLP:journals/siamcomp/SimonsW89}:
they give a polynomial time algorithm for the identical job lengths case and prove the~$\{1,p\}$-case for multiple machines to be NP-complete.
Their proof assumes that both the long job length~$p$ and the number of machines are part of the input.
If we add precedence constraints to the problem, in the identical job lengths case this does not make the problem harder~\citep{DBLP:journals/siamcomp/GareyJST81},
but even the~$\{1,p\}$-problem on one machine with precedences is NP-complete~\citep{trvoor1pmetprecsNPcomplete}, also assuming that the long job length~$p$ is part of the input.

The complexity status of the general two-job-lengths problem has been noted as an open problem \citep{DBLP:journals/siamcomp/SimonsW89,DBLP:conf/esa/Sgall12}.
In this paper we prove NP-completeness of the problem for any fixed pair of non-unit job lengths.\footnote{In Jan Elffers' master thesis, another proof is given that the scheduling problem with two non-unit job lengths is NP-complete, but the reduction given there is more complicated than the one in this paper, because it contains more types of gadgets. The thesis is available at \url{http://resolver.tudelft.nl/uuid:5def2dbb-67d1-4672-a0b1-561d7dc1a74f}.}

Formally, we have the following result, using $p$ to denote the length of the \emph{long} jobs and $q$ for the \emph{short} jobs.

\begin{theorem}[NP-completeness result]\label{thm:mainresult}
Let $p > q > 1$ be two integer job lengths. The scheduling problem with release times and deadlines on the set of job lengths $P(S) = \{p,q\}$ is NP-complete.
\end{theorem}

This implies that the problem is strongly NP-complete, because we reduce all instances of an NP-complete problem in polynomial time and with polynomial size output to an instance with fixed $p$ and $q$.

\section{Overview of the reduction}
Our proof is via an intermediate problem, denoted by $\operatorname{AUX}(p,q)$.
%This problem has the convenient property that the trade-offs in scheduling made between the long and short jobs have a very simple form.
Informally, this is a scheduling problem where jobs can have both an early and a late deadline and where pairs of jobs can be defined of which at least one needs to meet the early deadline.
We prove that this problem is both polynomial-time reducible to the original scheduling problem and NP-complete by a reduction from Boolean Satisfiability.
Together, this implies Theorem~\ref{thm:mainresult}.

Formally, our lemma's are the following:

\begin{lemma}\label{lem:partA}
For any two integer job lengths $p > q \geq 1$, the problem $\operatorname{AUX}(p,q)$
is polynomial-time reducible to the original scheduling problem on job lengths $\{p,q\}$.
\end{lemma}

\begin{lemma}\label{lem:partB}
For any two integer job lengths $p > q > 1$, the problem $\operatorname{AUX}(p,q)$ is NP-complete.
\end{lemma}

%The NP-completeness result implies that the~$\{1,p\}$ case is the maximally polynomial time solvable problem with respect to this parameter.
%The other result in this paper is a relaxation of the unrestricted problem to the $\{1,p\}$ case, which we call the half-preemptive relaxation.
%This relaxation can be used to compute a lower bound on the optimal $L_{\max}$ value in branch and bound algorithms for~$1|r_i|L_{\max}$ such as the one by~\cite{Carlier198242}.
%The relaxation used in his original paper is the fully preemptive relaxation; our relaxation is both stronger and more expensive to compute.

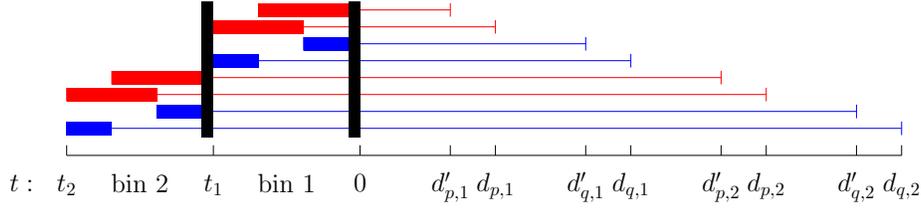
\begin{figure*}
\makebox[\textwidth][c]{
\scalebox{0.125}{
\begin{tikzpicture}[xscale=1.2,yscale=1.8]

\def\dpOneAcc{8}
\def\dpOne{12}
\def\dqOneAcc{20}
\def\dqOne{24}
\def\dpTwoAcc{32}
\def\dpTwo{36}
\def\dqTwoAcc{44}
\def\dqTwo{48}

\draw[color=blue,line width=2pt] (-26.0,8.1) -- (-26.0,8.9);
\draw[color=blue,line width=2pt] (\dqTwo,8.1) -- (\dqTwo,8.9);
\draw[color=blue,line width=2pt] (-26.0,8.5) -- (\dqTwo,8.5);
\fill[color=blue] (-26.0,8.1) -- (-22.0,8.1) -- (-22.0,8.9) -- (-26.0,8.9) -- cycle;
\draw[color=blue,line width=2pt] (-18.0,9.1) -- (-18.0,9.9);
\draw[color=blue,line width=2pt] (\dqTwoAcc,9.1) -- (\dqTwoAcc,9.9);
\draw[color=blue,line width=2pt] (-18.0,9.5) -- (\dqTwoAcc,9.5);
\fill[color=blue] (-18.0,9.1) -- (-14.0,9.1) -- (-14.0,9.9) -- (-18.0,9.9) -- cycle;
\draw[color=red,line width=2pt] (-26.0,10.1) -- (-26.0,10.9);
\draw[color=red,line width=2pt] (\dpTwo,10.1) -- (\dpTwo,10.9);
\draw[color=red,line width=2pt] (-26.0,10.5) -- (\dpTwo,10.5);
\fill[color=red] (-26.0,10.1) -- (-18.0,10.1) -- (-18.0,10.9) -- (-26.0,10.9) -- cycle;
\draw[color=red,line width=2pt] (-22.0,11.1) -- (-22.0,11.9);
\draw[color=red,line width=2pt] (\dpTwoAcc,11.1) -- (\dpTwoAcc,11.9);
\draw[color=red,line width=2pt] (-22.0,11.5) -- (\dpTwoAcc,11.5);
\fill[color=red] (-22.0,11.1) -- (-14.0,11.1) -- (-14.0,11.9) -- (-22.0,11.9) -- cycle;
\draw[color=blue,line width=2pt] (-13.0,12.1) -- (-13.0,12.9);
\draw[color=blue,line width=2pt] (\dqOne,12.1) -- (\dqOne,12.9);
\draw[color=blue,line width=2pt] (-13.0,12.5) -- (\dqOne,12.5);
\fill[color=blue] (-13.0,12.1) -- (-9.0,12.1) -- (-9.0,12.9) -- (-13.0,12.9) -- cycle;
\draw[color=blue,line width=2pt] (-5.0,13.1) -- (-5.0,13.9);
\draw[color=blue,line width=2pt] (\dqOneAcc,13.1) -- (\dqOneAcc,13.9);
\draw[color=blue,line width=2pt] (-5.0,13.5) -- (\dqOneAcc,13.5);
\fill[color=blue] (-5.0,13.1) -- (-1.0,13.1) -- (-1.0,13.9) -- (-5.0,13.9) -- cycle;
\draw[color=red,line width=2pt] (-13.0,14.1) -- (-13.0,14.9);
\draw[color=red,line width=2pt] (\dpOne,14.1) -- (\dpOne,14.9);
\draw[color=red,line width=2pt] (-13.0,14.5) -- (\dpOne,14.5);
\fill[color=red] (-13.0,14.1) -- (-5.0,14.1) -- (-5.0,14.9) -- (-13.0,14.9) -- cycle;
\draw[color=red,line width=2pt] (-9.0,15.1) -- (-9.0,15.9);
\draw[color=red,line width=2pt] (\dpOneAcc,15.1) -- (\dpOneAcc,15.9);
\draw[color=red,line width=2pt] (-9.0,15.5) -- (\dpOneAcc,15.5);
\fill[color=red] (-9.0,15.1) -- (-1.0,15.1) -- (-1.0,15.9) -- (-9.0,15.9) -- cycle;
\fill[color=black] (-1.0,8.0) -- (0.0,8.0) -- (0.0,16.0) -- (-1.0,16.0) -- cycle;
\fill[color=black] (-14.0,8.0) -- (-13.0,8.0) -- (-13.0,16.0) -- (-14.0,16.0) -- cycle;
%\draw[color=gray] (0.0,-2.0) -- (0.0,16.0);

\path (-13,8-3.25) node[anchor=base] {\fontsize{16pt}{16pt}\selectfont{\scalebox{5.0}{$t_1$}}};
\path (-26,8-3.25) node[anchor=base] {\fontsize{16pt}{16pt}\selectfont{\scalebox{5.0}{$t_2$}}};
\path (  0,8-3.25) node[anchor=base] {\fontsize{16pt}{16pt}\selectfont{\scalebox{5.0}{$0$}}};
\path (-30,8-3.25) node[anchor=base] {\fontsize{16pt}{16pt}\selectfont{\scalebox{5.0}{$t:$}}};
%\draw[color=black] (-13.0,-2.0) -- (-13.0,16.0);
%\draw[color=black] (-26.0,-2.0) -- (-26.0,16.0);
%\draw[color=black] (-39.0,-2.0) -- (-39.0,16.0);
%\draw[color=black] (-52.0,-2.0) -- (-52.0,16.0);

\draw[color=black] (-26.000000,8-1.100000) -- (-26.000000,8-0.500000);
\draw[color=black] (-26.000000,8-1.100000) -- (-13.000000,8-1.100000);
\draw[color=black] (-13.000000,8-1.100000) -- (-13.000000,8-0.500000);
\draw[color=black] (-13.000000,8-1.100000) -- (-0.000000,8-1.100000);
\draw[color=black] (-0.000000,8-1.100000) -- (-0.000000,8-0.500000);
\draw[color=black] (-0.000000,8-1.100000) -- (\dqTwo,8-1.100000);
\draw[color=black] (\dpOneAcc,8-1.100000) -- (\dpOneAcc,8-0.500000);
\draw[color=black] (\dpOne,8-1.100000) -- (\dpOne,8-0.500000);
\draw[color=black] (\dqOneAcc,8-1.100000) -- (\dqOneAcc,8-0.500000);
\draw[color=black] (\dqOne,8-1.100000) -- (\dqOne,8-0.500000);
\draw[color=black] (\dpTwoAcc,8-1.100000) -- (\dpTwoAcc,8-0.500000);
\draw[color=black] (\dpTwo,8-1.100000) -- (\dpTwo,8-0.500000);
\draw[color=black] (\dqTwoAcc,8-1.100000) -- (\dqTwoAcc,8-0.500000);
\draw[color=black] (\dqTwo,8-1.100000) -- (\dqTwo,8-0.500000);
\path (\dpOneAcc,8-3.25) node[anchor=base] {\fontsize{16pt}{16pt}\selectfont{\scalebox{5.0}{$d'_{p,1}$}}};
\path (\dpOne,8-3.25) node[anchor=base] {\fontsize{16pt}{16pt}\selectfont{\scalebox{5.0}{$d_{p,1}$}}};
\path (\dqOneAcc,8-3.25) node[anchor=base] {\fontsize{16pt}{16pt}\selectfont{\scalebox{5.0}{$d'_{q,1}$}}};
\path (\dqOne,8-3.25) node[anchor=base] {\fontsize{16pt}{16pt}\selectfont{\scalebox{5.0}{$d_{q,1}$}}};
\path (\dpTwoAcc,8-3.25) node[anchor=base] {\fontsize{16pt}{16pt}\selectfont{\scalebox{5.0}{$d'_{p,2}$}}};
\path (\dpTwo,8-3.25) node[anchor=base] {\fontsize{16pt}{16pt}\selectfont{\scalebox{5.0}{$d_{p,2}$}}};
\path (\dqTwoAcc,8-3.25) node[anchor=base] {\fontsize{16pt}{16pt}\selectfont{\scalebox{5.0}{$d'_{q,2}$}}};
\path (\dqTwo,8-3.25) node[anchor=base] {\fontsize{16pt}{16pt}\selectfont{\scalebox{5.0}{$d_{q,2}$}}};
%\draw[color=black] (-46,-6.5) -- (-46,-7.5);

%\draw[color=black] (-26.000000,-5.500000) -- (-26.000000,-6.500000) -- (-14.000000,-6.500000) -- (-14.000000,-5.500000);
%\draw[color=black] (-20,-6.5) -- (-20,-7.5);
\node[draw=none,fill=none,anchor=mid] at (-19.5,8-3.25) {\fontsize{16pt}{16pt}\selectfont{\scalebox{5.0}{bin 2}}};

%\draw[color=black] (-13.000000,-5.500000) -- (-13.000000,-6.500000) -- (-1.000000,-6.500000) -- (-1.000000,-5.500000);
%\draw[color=black] (-7,-6.5) -- (-7,-7.5);
\node[draw=none,fill=none,anchor=mid] at (-6.5,8-3.25) {\fontsize{16pt}{16pt}\selectfont{\scalebox{5.0}{bin 1}}};

\end{tikzpicture}
}
}
\caption{\label{fig:highlevel}
An instance of the stacked scheduling problem with~$N=4$ pairs of jobs per stack and without ordinary jobs.
The horizontal lines indicate availability intervals, and the horizontal bars indicate the earliest possible position at which the jobs can be scheduled.
The long (length-$p$) jobs are colored red and the short (length-$q$) jobs are colored blue.
The vertical bars denote separator jobs occupying fixed time intervals.
Note that the availability intervals within each stack form a nested sequence, and the long jobs have earlier deadlines than the short jobs for the same stack index.
}
\end{figure*}

%\section{The auxiliary scheduling problem}
The auxiliary problem $\operatorname{AUX}(p,q)$ is the following extension of the original problem.
Next to a set of jobs~$J = \{([r_i,d_i],p_i)\mid i=1,\ldots,|J|\}$ with non-negative release times $r_i$,
we add two sequences of ``pending'' jobs $J_p, J_q$ of equal size $N = |J_p| = |J_q|$, one per job length, both ordered by deadline.
Pending jobs have a common release time of 0, and two deadlines per job, an early deadline and a late deadline.
For both $J_p$ and $J_q$, the intervals formed by early and late deadline of all pending jobs in the same sequence should not intersect.
Furthermore, there exists an ordering such that the $i$'th most urgent (i.e., with the earliest deadline) long job is more urgent than the $i$'th most urgent short job, for all $1\leq i\leq N$.
The problem is to find a feasible schedule for~$J \cup J_p \cup J_q$ in which
at least one of the $i$'th most urgent long job or the $i$'th most urgent short job completes by its early deadline, for all $1\leq i\leq N$.
The key property of this model is that it is possible to specify dependencies between jobs with deadlines far away from each other.

We choose~$t=0$ as the common release time of all pending jobs and assume the ordinary jobs to have non-negative release times.
The formal definition is as follows.

\begin{definition}[Problem $\operatorname{AUX}(p,q)$]\label{defn:model}
Given are a set of \emph{ordinary} jobs~$J$ with long job lengths $p$ or short lengths $q$, and non-negative release times and deadlines,
and two sequences of $N$ \emph{pending} jobs, $J_p$ and $J_q$.
These pending jobs have release time $0$, and are ordered by deadline in the sequence, so for each $1\leq i\leq N$, the pending jobs with index $i$ from $J_p$ and $J_q$ are 
$J_{p,i}=([0,d_{p,i}],p)$ and $J_{q,i}=([0,d_{q,i}],q)$, respectively.
Additionally, each pending job has an early deadline, denoted by $d'_{p,i}$ and $d'_{q,i}$, respectively, such that these deadlines meet the following conditions:
\begin{align*}
d'_{p,1} \leq d_{p,1} \leq d'_{p,2} \leq d_{p,2} \leq \ldots \leq d'_{p,N} \leq d_{p,N} \\
d'_{q,1} \leq d_{q,1} \leq d'_{q,2} \leq d_{q,2} \leq \ldots \leq d'_{q,N} \leq d_{q,N} \\
d_{p,i} \leq d'_{q,i} \quad \text{for all~$i=1,\ldots,N$}
\end{align*}
The \emph{urgency order} on these job pairs is defined as $(J_{p,1}, J_{q,1}), (J_{p,2}, J_{q,2}), \ldots, (J_{p,n}, J_{q,n})$.

\textbf{Question:} does there exist a feasible schedule for~$J \cup J_p \cup J_q$ such that for all~$1\leq i\leq N$, at least one of the jobs $J_{p,i}$ or $J_{q,i}$ completes by the early deadline, i.e., 
$J_{p,i}$ completes by time~$\leq d'_{p,i}$ or~$J_{q,i}$ completes by time~$\leq d'_{q,i}$?
\begin{align*}
\end{align*}
\end{definition}

We say that a pair of a long and a short pending job with the same position $i$ in the sequence (i.e., the same urgency) are \emph{connected}.
In a feasible schedule, only one of each such connected pairs can be scheduled after the early deadline (but before the later deadline). 
We say that that job is then \emph{late}, otherwise we say that the job is \emph{early}.

\section{Reduction from~$\operatorname{AUX}(p,q)$ to the original scheduling problem}
In this section we prove Lemma~\ref{lem:partA}, that is, we prove that $\operatorname{AUX}(p,q)$ is efficiently reducible to the original problem.
We call the problem defined by the reduction the \emph{stacked scheduling problem}.
This is a class of instances of two job lengths.
First we define this reduction, and then we show that each feasible schedule for an instance of $\operatorname{AUX}(p,q)$ can be translated to a feasible schedule for an instance $I$ of this stacked scheduling problem, and vice versa.

First we sketch the idea of the transformation informally.
For each pending job (which has an early and late deadline), we create two jobs with ordinary availability intervals, which we call the \emph{inner} and the \emph{outer} job, together forming a \emph{stack pair}.
The inner job has the early deadline~$d'$ and the outer job has the late deadline~$d$.
The availability intervals of these jobs are extended to before~$t=0$ such that the intervals form a nested sequence per job length.
We add \emph{separator} jobs (jobs occupying fixed positions) before~$t=0$, partitioning the time line before~$t=0$ into \emph{bins} of length~$p+q$.
We number the bins~$1,\ldots,N$ starting from~$t=0$ backward in time.
The four jobs representing the $i$'th most urgent long job and the $i$'th most urgent short job are called a \emph{quad}.
The release times of these jobs are aligned with the bins:
the jobs in the $i$'th quad have release times in the~$i$'th bin;
the outer jobs have release times at the start of the bin,
and the inner jobs have release times such that they can only be scheduled at the end of their bin (or later).

An example of a reduced instance of four pairs of pending jobs is displayed in Figure~\ref{fig:highlevel}.
Here the four respective quads are placed above each other, the one with the smallest availability intervals first.

We denote by~$w$ the job length of the separator jobs before $t=0$. The purpose of this is to improve the readability.
For the resulting instance to have job lengths in~$\{p,q\}$, we can choose $w=p$ or $w=q$.
Formally, the reduction is defined as follows.

\begin{definition}[Reduction]\label{defn:redpartA}
Given an instance of $\operatorname{AUX}(p,q)$, the instance $I$ of a $1|r_i|L_{\max}$ scheduling problem comprises of all the ordinary jobs $J$, and the following sequences of $N$ jobs to replace the pending jobs:
%Given are a set of jobs~$J$ with job lengths from~$\{p,q\}$ and non-negative release times,
%and two sequences of pairs of deadlines $(d'_{p,i},d_{p,i})$ and $(d'_{q,i},d_{q,i})$, $1\leq i\leq N$ satisfying the conditions of Definition~\ref{defn:model}.
separator jobs $J_{\mathit{sep}}$, representatives $J^I_p$ and $J^O_p$ for $J_p$, and representatives $J^I_q$ and $J^O_q$ for $J_q$.
For~$i=1,\ldots,N$, let~$t_i=-(p+q+w)\cdot i$ and then let:
\begin{align*}
J_{\mathit{sep},i} &= ([t_i+p+q,t_i+p+q+w],w) \\
J^I_{p,i} &= ([t_i+q,d'_{p,i}],p) \\
J^O_{p,i} &= ([t_i,d_{p,i}],p) \\
J^I_{q,i} &= ([t_i+p,d'_{q,i}],q) \\
J^O_{q,i} &= ([t_i,d_{q,i}],q) \\
\end{align*}
\end{definition}

To extend a feasible schedule for $\operatorname{AUX}(p,q)$ to a feasible schedule for $I$,
we replace the pending jobs by inner/outer jobs and schedule the remaining inner/outer jobs before $t=0$.
More precisely, for each $i$'th quad, we replace one $i$'th pending job completing by its early deadline by the inner job of its stack pair,
and the other $i$'th pending job by the corresponding outer job.
Consequently, the remaining jobs from each $i$'th quad can be scheduled in bin~$i$.

For the other direction, we prove that any schedule for $I$
can be transformed into the form described above, so that the part after $t=0$ is a schedule for the $\operatorname{AUX}(p,q)$ instance.
%Together with the above argument, this proves Lemma~\ref{lem:partA}.

Formally, the intended way to split each quad is defined as follows.
\begin{definition}[Proper pairing]\label{defn:properlabel}
For each quad with index $i$, the \emph{proper pairing} consists of the following two pairs: $(J^I_{p,i},J^O_{q,i})$ and $(J^O_{p,i},J^I_{q,i})$.
\end{definition}

The proof idea is that for any other partial schedule until~$t=0$, the load induced on the part after~$t=0$ is both coarser (a long job allows less flexibility than multiple short jobs with the same deadline)
and the load profile induced by the deadlines leans more towards earlier deadlines.

\begin{proof}[Proof of Lemma~\ref{lem:partA}]
A solution to $\operatorname{AUX}(p,q)$ can be translated to a feasible schedule for the instance of Definition~\ref{defn:redpartA}
by doing the following for each quad~$i$:
if $J_{p,i}$ completes early, schedule $J_{p,i}^O$ and $J_{q,i}^I$ in bin~$i$ ($J_{p,i}^O$ to the left),
$J_{p,i}^I$ at the place of $J_{p,i}$ and $J_{q,i}^O$ at the place of $J_{q,i}$;
if $J_{q,i}$ completes early, schedule $J_{q,i}^O$ and $J_{p,i}^I$ in bin~$i$ ($J_{q,i}^O$ to the left),
$J_{q,i}^I$ at the place of $J_{q,i}$ and $J_{p,i}^O$ at the place of $J_{p,i}$.

For the other direction, let a feasible schedule for $I$ be given.
We transform this schedule such that each bin~$i$ contains a pair of jobs from the proper pairing of quad~$i$.
The part of the schedule after $t=0$, replacing $J_{p,i}^I/J_{p,i}^O$ by $J_{p,i}$ and $J_{q,i}^I/J_{q,i}^O$ by $J_{q,i}$, is then a feasible schedule for $\operatorname{AUX}(p,q)$.

First, because all ordinary jobs have release times after~$t=0$, they are not scheduled in the bins.
We do an exchange argument that fills the bins in the required way.
We prove by induction that for~$i=1,\ldots,N$, bins~$1,\ldots,i$ can be filled in this way.
In step~$i$, note that stack pair with index~$i$ is scheduled in bin~$i$ or after~$t=0$,
because by induction bins~$j=1,\ldots,i-1$ are filled with stack pairs with lower index.
We say that we \emph{swap in} a job into bin~$i$ if we add it to the left in the bin and push the other jobs in the bin to the right.
This pushing to the right is possible because all stack job deadlines are after~$t=0$.

If bin~$i$ contains no long job, swap in the~$i$'th outer long job which is scheduled after~$t=0$.
Because the bin has length~$p+q$, one short job can remain and the other(s), with total length at most~$p$, fit in the original position of the long job.
These jobs complete by their deadlines because~$d'_{q,i} \geq d_{p,i}$ and the other short jobs have later deadlines.
If bin~$i$ contains no short job, we can clearly swap in the~$i$'th outer short job.

Now that bin~$i$ contains a long and a short job,
we can swap the leftmost in the bin to an~$i$'th outer job, and the other job to an~$i$'th inner job,
because the availability intervals per stack are nested and those jobs are the ones with the smallest availability interval in their stacks that can be scheduled at these positions.
Therefore the content of bin $i$ is a pair from the proper pairing.
Concluding, we have thus obtained a feasible schedule for $\operatorname{AUX}(p,q)$, completing the proof in the other direction.
\end{proof}

We conclude that the $1|r_i|L_{\max}$ scheduling problem on job lengths $\{p,q\}$ is as hard as $\operatorname{AUX}(p,q)$.

\section{NP-completeness of $\operatorname{AUX}(p,q)$ for $q>1$}
In this section we prove Lemma~\ref{lem:partB}, that is, we prove the auxiliary scheduling problem to be NP-complete if $p > q > 1$.

We reduce from the Boolean Satisfiability (SAT) problem, assuming an instance is given in conjunctive normal form.
\begin{definition}[Boolean Satisfiability]
Given is a Boolean formula which contains $n$ variables $x_1,x_2,\ldots,x_n$ and $m$ clauses, where each clause $C_j$ is a disjunction of literals (where a literal is a variable $x_i$ or a negation of a variable, $\neg x_i$).

\textbf{Question:} does there exist a satisfying assignment of true or false to each of the variables such that in each clause at least one of the literals evaluates to true?
\end{definition}

We first explain the idea of the reduction by a simple example. 
Then we give the general reduction and proof.

\subsection{Informal description}
%\begin{comment}
Let us first consider an example of a reduction of a Boolean formula with two variables, $x_1$ and $x_2$, and two clauses, $C_1 = (x_1\vee x_2)$ and $C_2 = (\neg x_1 \vee x_2)$.
As an example of the general reduction, we first show how to reduce this instance of the satisfiability problem to $\operatorname{AUX}(4,2)$.
This scheduling problem is explained in the following step by step.
On the most abstract level we define sections, within each section we define a number of blocks, and each block consists of two or three jobs.

\paragraph{Reduction}
The reduction consists of four \emph{sections} of equal length, each corresponding to a literal, in the order $x_1$, $\neg{x}_1$, $x_2$, $\neg{x}_2$.
Each section corresponds to a time interval which ends with a separator job.
In each section, we have a sequence of \emph{blocks} positioned after each other.
A block is a set of pending jobs and ordinary jobs with deadlines close to each other.
In this example, each section for literal~$l$ has three blocks: one for the literal, $\Lit{l}$, and one for each clause, i.e., $\Cl{l}{1}$ and $\Cl{l}{2}$.
In general each section has $m+1$ blocks, where $m$ is the number of clauses. All clauses are represented at each literal section.
Each section has exactly one unit of idle time.

The four types of blocks we use are given in Figure~\ref{fig:hetplaatje}.
The $\Lit{l}$ block has type $V^+$ for positive literals ($x_1$, $x_2$) and type $V^-$ for negative literals ($\neg{x}_1$, $\neg{x}_2$).
The $\Cl{l}{j}$ block has type $\Cactive$ if $l \in C_j$ and type $\Cinactive$ otherwise.
A $\Cactive$ block has a long and a short pending job with deadlines 6 and 7 relative to the earliest start time, and a $\Cinactive$ block also has a long and a short pending job with deadlines 5 and 7 relative to the earliest start time.
These blocks are formally defined in Definition~\ref{def:blocks-def}.
The deadlines of jobs in the $\Cl{l}{j}$ blocks are shifted in time by the sum of job lengths in the blocks preceding it within the same section.

Pending jobs form pairs as follows: each long pending job in a literal block is paired with the short pending job in the next literal block, and each long pending job in a clause block is paired with the short pending job in the next block for the same clause.
This leaves the first $m$ short pending jobs and the last $m$ long pending jobs unpaired, and these are required to finish before their early deadline, effectively turning them into ordinary jobs.

The resulting $\operatorname{AUX}(4,2)$ scheduling instance is visualized in Figure~\ref{fig:helereductie}.
In this figure, each job is positioned at its earliest possible start time among all feasible schedules. For pending jobs, this is later than the release time of 0, because other jobs occupy the earlier time interval (Proposition~\ref{prop:blk}).
Separator jobs are represented here by the black vertical bars.

\begin{figure*}\centering
\begin{minipage}[b]{.5\linewidth}
\centering
\begin{tikzpicture}[xscale=0.25,yscale=0.5]
\fill[color=blue, opacity=0.1] (0,4) -- (18,4) -- (18,1) -- (0,1) -- cycle;

	\node[draw=none,fill=none,yshift=-10pt,anchor=base] at (-1.5,.75) {  $t:$};
	\draw[color=gray] (0.0,1.0) -- (0.0,4.0);
	\node[draw=none,fill=none,yshift=-10pt,anchor=base] at (0,.75) {  $0$};
	%\draw[color=gray] (2.0,1.0) -- (2.0,4.0);
	%\node[draw=none,fill=none,yshift=-10pt,anchor=base] at (2,.75) {  $1$};
	\draw[color=gray] (4.0,1.0) -- (4.0,4.0);
	\node[draw=none,fill=none,yshift=-10pt,anchor=base] at (4,.75) {  $2$};
	%\draw[color=gray] (6.0,1.0) -- (6.0,4.0);
	%\node[draw=none,fill=none,yshift=-10pt,anchor=base] at (6,.75) {  $3$};
	\draw[color=gray] (8.0,1.0) -- (8.0,4.0);
	\node[draw=none,fill=none,yshift=-10pt,anchor=base] at (8,.75) {  $4$};
	%\draw[color=gray] (10.0,1.0) -- (10.0,4.0);
	%\node[draw=none,fill=none,yshift=-10pt,anchor=base] at (10,.75) {  $5$};
	\draw[color=gray] (12.0,1.0) -- (12.0,4.0);
	\node[draw=none,fill=none,yshift=-10pt,anchor=base] at (12,.75) {  $6$};
	%\draw[color=gray] (14.0,1.0) -- (14.0,4.0);
	%\node[draw=none,fill=none,yshift=-10pt,anchor=base] at (14,.75) {  $7$};
	\draw[color=gray] (16.0,1.0) -- (16.0,4.0);
	\node[draw=none,fill=none,yshift=-10pt,anchor=base] at (16,.75) {  $8$};
	%\draw[color=gray] (18.0,1.0) -- (18.0,4.0);
	%\node[draw=none,fill=none,yshift=-10pt,anchor=base] at (18,.75) {  $9$};
	\draw[color=black] (20.0,1.0) -- (20.0,4.0);
	\node[draw=none,fill=none,yshift=-10pt,anchor=base] at (20,.75) {  $10$};

	\draw[line width=8pt,color=blue] (2.0,3.5) -- (6.0,3.5);
	\draw[line width=2pt,color=blue] (2.0,3.5) -- (8.0,3.5);
	\draw[line width=2pt] (2.0,3.1) -- (2.0,3.9);
	\draw[line width=2pt] (8.0,3.1) -- (8.0,3.9);
	\draw[line width=8pt,color=blue] (0.0,2.5) -- (4,2.5);
	\draw[line width=2pt,color=blue] (0.0,2.5) -- (18,2.5);
	\draw[line width=2pt] (0.0,2.1) -- (0.0,2.9);
	\draw[line width=2pt] (18.0,2.1) -- (18.0,2.9);
	\draw[line width=2pt,color=red] (-2.0,1.5) -- (16.0,1.5);
	\draw[line width=8pt,color=red] (0,1.5) -- (8.0,1.5);
	\draw[line width=2pt] (14.0,1.1) -- (14.0,1.9);
	\draw[line width=2pt] (16.0,1.1) -- (16.0,1.9);
	\end{tikzpicture}
\subcaption{The $V^+$ block.}
\end{minipage}%
\begin{minipage}[b]{.5\linewidth}
	\centering
	\begin{tikzpicture}[xscale=0.25,yscale=0.5]
\fill[color=blue, opacity=0.1] (0,3) -- (14,3) -- (14,1) -- (0,1) -- cycle;

	\node[draw=none,fill=none,yshift=-10pt,anchor=base] at (-1.5,.75) {  $t:$};
	\draw[color=gray] (0.0,1.0) -- (0.0,4.0);
	\node[draw=none,fill=none,yshift=-10pt,anchor=base] at (0,.75) {  $0$};
	%\draw[color=gray] (2.0,1.0) -- (2.0,4.0);
	%\node[draw=none,fill=none,yshift=-10pt,anchor=base] at (2,.75) {  $1$};
	\draw[color=gray] (4.0,1.0) -- (4.0,4.0);
	\node[draw=none,fill=none,yshift=-10pt,anchor=base] at (4,.75) {  $2$};
	%\draw[color=gray] (6.0,1.0) -- (6.0,4.0);
	%\node[draw=none,fill=none,yshift=-10pt,anchor=base] at (6,.75) {  $3$};
	\draw[color=gray] (8.0,1.0) -- (8.0,4.0);
	\node[draw=none,fill=none,yshift=-10pt,anchor=base] at (8,.75) {  $4$};
	%\draw[color=gray] (10.0,1.0) -- (10.0,4.0);
	%\node[draw=none,fill=none,yshift=-10pt,anchor=base] at (10,.75) {  $5$};
	\draw[color=gray] (12.0,1.0) -- (12.0,4.0);
	\node[draw=none,fill=none,yshift=-10pt,anchor=base] at (12,.75) {  $6$};
	%\draw[color=gray] (14.0,1.0) -- (14.0,4.0);
	%\node[draw=none,fill=none,yshift=-10pt,anchor=base] at (14,.75) {  $7$};
	\draw[color=black] (16.0,1.0) -- (16.0,4.0);
	\node[draw=none,fill=none,yshift=-10pt,anchor=base] at (16,.75) {  $8$};
	%\draw[color=gray] (18.0,1.0) -- (18.0,4.0);
	%\node[draw=none,fill=none,yshift=-10pt,anchor=base] at (18,.75) {  $9$};

	\draw[line width=8pt,color=blue] (0.0,1.5) -- (4,1.5);
	\draw[line width=2pt,color=blue] (-2.0,1.5) -- (14,1.5);
	\draw[line width=2pt] (12.0,1.1) -- (12.0,1.9);
	\draw[line width=2pt] (14.0,1.1) -- (14.0,1.9);
	\draw[line width=8pt,color=red] (0.0,2.5) -- (8,2.5);
	\draw[line width=2pt,color=red] (-2.0,2.5) -- (14,2.5);
	\draw[line width=2pt] (12.0,2.1) -- (12.0,2.9);
	\draw[line width=2pt] (14.0,2.1) -- (14.0,2.9);
	\end{tikzpicture}
\subcaption{The $\Cactive$ block.}
\end{minipage}
%		\begin{tikzpicture}[xscale=0.25,yscale=0.5]
%		\node[draw=none,fill=none,yshift=-10pt,anchor=base] at (-1.5,.75) {  $t:$};
%		\draw[color=gray] (0.0,1.0) -- (0.0,4.0);
%		\node[draw=none,fill=none,yshift=-10pt,anchor=base] at (0,.75) {  $0$};
%		%\draw[color=gray] (2.0,1.0) -- (2.0,4.0);
%		%\node[draw=none,fill=none,yshift=-10pt,anchor=base] at (2,.75) {  $1$};
%		\draw[color=gray] (4.0,1.0) -- (4.0,4.0);
%		\node[draw=none,fill=none,yshift=-10pt,anchor=base] at (4,.75) {  $2$};
%		%\draw[color=gray] (6.0,1.0) -- (6.0,4.0);
%		%\node[draw=none,fill=none,yshift=-10pt,anchor=base] at (6,.75) {  $3$};
%		\draw[color=gray] (8.0,1.0) -- (8.0,4.0);
%		\node[draw=none,fill=none,yshift=-10pt,anchor=base] at (8,.75) {  $4$};
%		%\draw[color=gray] (10.0,1.0) -- (10.0,4.0);
%		%\node[draw=none,fill=none,yshift=-10pt,anchor=base] at (10,.75) {  $5$};
%		\draw[color=black] (12.0,1.0) -- (12.0,4.0);
%		\node[draw=none,fill=none,yshift=-10pt,anchor=base] at (12,.75) {  $6$};
%		%\draw[color=gray] (14.0,1.0) -- (14.0,4.0);
%		%\node[draw=none,fill=none,yshift=-10pt,anchor=base] at (14,.75) {  $7$};
%
%		\draw[line width=8pt,color=red] (0.0,1.5) -- (8,1.5);
%		\draw[line width=2pt,color=red] (-2.0,1.5) -- (8,1.5);
%		\draw[line width=2pt] (6.0,1.1) -- (6.0,1.9);
%		\draw[line width=2pt] (8.0,1.1) -- (8.0,1.9);
%		\end{tikzpicture}
\\
\vspace{.5cm}
\begin{minipage}[b]{.5\linewidth}
	\centering
	\begin{tikzpicture}[xscale=0.25,yscale=0.5]
\fill[opacity=0.1] (0,4) -- (18,4) -- (18,1) -- (0,1) -- cycle;

	\node[draw=none,fill=none,yshift=-10pt,anchor=base] at (-1.5,.75) {  $t:$};
	\draw[color=gray] (0.0,1.0) -- (0.0,4.0);
	\node[draw=none,fill=none,yshift=-10pt,anchor=base] at (0,.75) {  $0$};
	%\draw[color=gray] (2.0,1.0) -- (2.0,4.0);
	%\node[draw=none,fill=none,yshift=-10pt,anchor=base] at (2,.75) {  $1$};
	\draw[color=gray] (4.0,1.0) -- (4.0,4.0);
	\node[draw=none,fill=none,yshift=-10pt,anchor=base] at (4,.75) {  $2$};
	%\draw[color=gray] (6.0,1.0) -- (6.0,4.0);
	%\node[draw=none,fill=none,yshift=-10pt,anchor=base] at (6,.75) {  $3$};
	\draw[color=gray] (8.0,1.0) -- (8.0,4.0);
	\node[draw=none,fill=none,yshift=-10pt,anchor=base] at (8,.75) {  $4$};
	%\draw[color=gray] (10.0,1.0) -- (10.0,4.0);
	%\node[draw=none,fill=none,yshift=-10pt,anchor=base] at (10,.75) {  $5$};
	\draw[color=gray] (12.0,1.0) -- (12.0,4.0);
	\node[draw=none,fill=none,yshift=-10pt,anchor=base] at (12,.75) {  $6$};
	%\draw[color=gray] (14.0,1.0) -- (14.0,4.0);
	%\node[draw=none,fill=none,yshift=-10pt,anchor=base] at (14,.75) {  $7$};
	\draw[color=gray] (16.0,1.0) -- (16.0,4.0);
	\node[draw=none,fill=none,yshift=-10pt,anchor=base] at (16,.75) {  $8$};
	%\draw[color=gray] (18.0,1.0) -- (18.0,4.0);
	%\node[draw=none,fill=none,yshift=-10pt,anchor=base] at (18,.75) {  $9$};
	\draw[color=black] (20.0,1.0) -- (20.0,4.0);
	\node[draw=none,fill=none,yshift=-10pt,anchor=base] at (20,.75) {  $10$};

	\draw[line width=8pt,color=blue] (6.0,3.5) -- (10.0,3.5);
	\draw[line width=2pt,color=blue] (6.0,3.5) -- (12.0,3.5);
	\draw[line width=2pt] (6.0,3.1) -- (6.0,3.9);
	\draw[line width=2pt] (12.0,3.1) -- (12.0,3.9);
	\draw[line width=8pt,color=red] (0.0,2.5) -- (8,2.5);
	\draw[line width=2pt,color=red] (0.0,2.5) -- (18,2.5);
	\draw[line width=2pt] (0.0,2.1) -- (0.0,2.9);
	\draw[line width=2pt] (18.0,2.1) -- (18.0,2.9);
	\draw[line width=2pt,color=blue] (-2.0,1.5) -- (16.0,1.5);
	\draw[line width=8pt,color=blue] (0,1.5) -- (4.0,1.5);
	\draw[line width=2pt] (4.0,1.1) -- (4.0,1.9);
	\draw[line width=2pt] (16.0,1.1) -- (16.0,1.9);
	\end{tikzpicture}
\subcaption{The $V^-$ block.}
\end{minipage}%
\begin{minipage}[b]{.5\linewidth}
	\centering
	\begin{tikzpicture}[xscale=0.25,yscale=0.5]
\fill[opacity=0.1] (0,3) -- (14,3) -- (14,1) -- (0,1) -- cycle;

	\node[draw=none,fill=none,yshift=-10pt,anchor=base] at (-1.5,.75) {  $t:$};
	\draw[color=gray] (0.0,1.0) -- (0.0,4.0);
	\node[draw=none,fill=none,yshift=-10pt,anchor=base] at (0,.75) {  $0$};
	%\draw[color=gray] (2.0,1.0) -- (2.0,4.0);
	%\node[draw=none,fill=none,yshift=-10pt,anchor=base] at (2,.75) {  $1$};
	\draw[color=gray] (4.0,1.0) -- (4.0,4.0);
	\node[draw=none,fill=none,yshift=-10pt,anchor=base] at (4,.75) {  $2$};
	%\draw[color=gray] (6.0,1.0) -- (6.0,4.0);
	%\node[draw=none,fill=none,yshift=-10pt,anchor=base] at (6,.75) {  $3$};
	\draw[color=gray] (8.0,1.0) -- (8.0,4.0);
	\node[draw=none,fill=none,yshift=-10pt,anchor=base] at (8,.75) {  $4$};
	%\draw[color=gray] (10.0,1.0) -- (10.0,4.0);
	%\node[draw=none,fill=none,yshift=-10pt,anchor=base] at (10,.75) {  $5$};
	\draw[color=gray] (12.0,1.0) -- (12.0,4.0);
	\node[draw=none,fill=none,yshift=-10pt,anchor=base] at (12,.75) {  $6$};
	%\draw[color=gray] (14.0,1.0) -- (14.0,4.0);
	%\node[draw=none,fill=none,yshift=-10pt,anchor=base] at (14,.75) {  $7$};
	\draw[color=black] (16.0,1.0) -- (16.0,4.0);
	\node[draw=none,fill=none,yshift=-10pt,anchor=base] at (16,.75) {  $8$};
	%\draw[color=gray] (18.0,1.0) -- (18.0,4.0);
	%\node[draw=none,fill=none,yshift=-10pt,anchor=base] at (18,.75) {  $9$};

	\draw[line width=8pt,color=blue] (0.0,1.5) -- (4,1.5);
	\draw[line width=2pt,color=blue] (-2.0,1.5) -- (14,1.5);
	\draw[line width=2pt] (10.0,1.1) -- (10.0,1.9);
	\draw[line width=2pt] (14.0,1.1) -- (14.0,1.9);
	\draw[line width=8pt,color=red] (0.0,2.5) -- (8,2.5);
	\draw[line width=2pt,color=red] (-2.0,2.5) -- (14,2.5);
	\draw[line width=2pt] (10.0,2.1) -- (10.0,2.9);
	\draw[line width=2pt] (14.0,2.1) -- (14.0,2.9);
	\end{tikzpicture}
\subcaption{The $\Cinactive$ block.}
%		\begin{tikzpicture}[xscale=0.25,yscale=0.5]
%		\node[draw=none,fill=none,yshift=-10pt,anchor=base] at (-1.5,.75) {  $t:$};
%		\draw[color=gray] (0.0,1.0) -- (0.0,4.0);
%		\node[draw=none,fill=none,yshift=-10pt,anchor=base] at (0,.75) {  $0$};
%		%\draw[color=gray] (2.0,1.0) -- (2.0,4.0);
%		%\node[draw=none,fill=none,yshift=-10pt,anchor=base] at (2,.75) {  $1$};
%		\draw[color=gray] (4.0,1.0) -- (4.0,4.0);
%		\node[draw=none,fill=none,yshift=-10pt,anchor=base] at (4,.75) {  $2$};
%		%\draw[color=gray] (6.0,1.0) -- (6.0,4.0);
%		%\node[draw=none,fill=none,yshift=-10pt,anchor=base] at (6,.75) {  $3$};
%		\draw[color=gray] (8.0,1.0) -- (8.0,4.0);
%		\node[draw=none,fill=none,yshift=-10pt,anchor=base] at (8,.75) {  $4$};
%		%\draw[color=gray] (10.0,1.0) -- (10.0,4.0);
%		%\node[draw=none,fill=none,yshift=-10pt,anchor=base] at (10,.75) {  $5$};
%		\draw[color=black] (12.0,1.0) -- (12.0,4.0);
%		\node[draw=none,fill=none,yshift=-10pt,anchor=base] at (12,.75) {  $6$};
%		%\draw[color=gray] (14.0,1.0) -- (14.0,4.0);
%		%\node[draw=none,fill=none,yshift=-10pt,anchor=base] at (14,.75) {  $7$};
%
%		\draw[line width=8pt,color=blue] (0.0,1.5) -- (4,1.5);
%		\draw[line width=2pt,color=blue] (-2.0,1.5) -- (4,1.5);
%		\draw[line width=2pt] (2.0,1.1) -- (2.0,1.9);
%		\draw[line width=2pt] (4.0,1.1) -- (4.0,1.9);
%		\end{tikzpicture}
\end{minipage}

\caption{\label{fig:hetplaatje}This visualizes the four types of blocks we use. The rectangles represent jobs and the bars represent their availability intervals: long jobs are colored red, and short jobs are colored blue.
The jobs with two deadlines denote pending jobs. The $V^+$ and $\Cactive$ blocks are displayed on a light blue background, and the $V^-$ and $\Cinactive$ blocks have a light grey background to be able to identify them more easily in subsequent figures.}
%}\caption{\label{fig:hetplaatje}The two literal gadgets $V^+$ (top left) for positive literals and $V^-$ (bottom left) for negative literals, the clause gadgets $C_{\mathit{active}}$ (top center) and $C_{\mathit{inactive}}$ (bottom center),
%and the dummy jobs $\smalldump$ (top right) and $\smalldumq$ (bottom right).
%The bars represent availability intervals: the red bars are for long jobs, the blue bars for short jobs.
%The jobs with two deadlines denote pending jobs.}
\end{figure*}
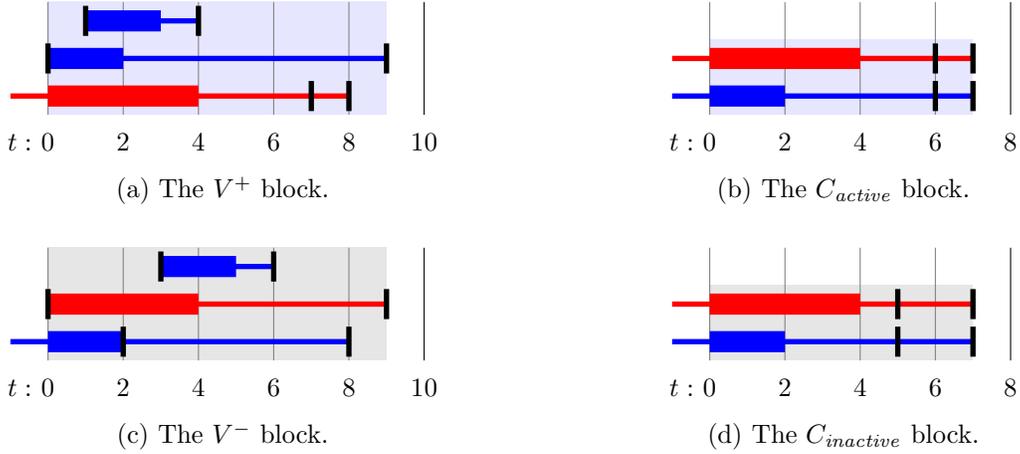

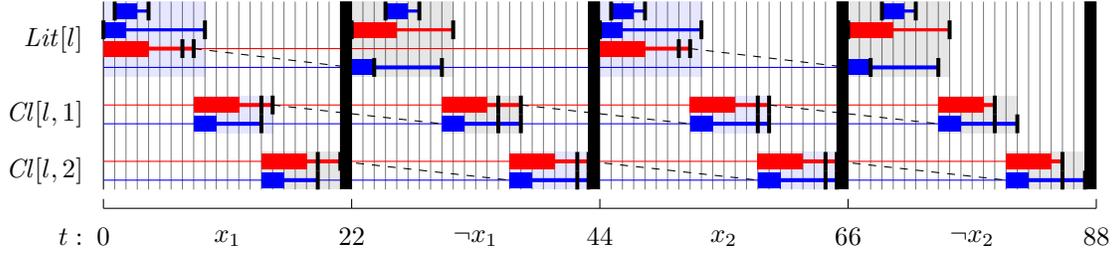
\begin{figure*}\centering
\makebox[\textwidth][c]{
\begin{tikzpicture}[xscale=0.15,yscale=0.25]
\draw[color=black] (8,18.0) -- (8,28);
\draw[color=gray] (9,18.0) -- (9,28);
\draw[color=gray] (10,18.0) -- (10,28);
\draw[color=gray] (11,18.0) -- (11,28);
\draw[color=gray] (12,18.0) -- (12,28);
\draw[color=gray] (13,18.0) -- (13,28);
\draw[color=gray] (14,18.0) -- (14,28);
\draw[color=gray] (15,18.0) -- (15,28);
\draw[color=gray] (16,18.0) -- (16,28);
\draw[color=gray] (17,18.0) -- (17,28);
\draw[color=gray] (18,18.0) -- (18,28);
\draw[color=gray] (19,18.0) -- (19,28);
\draw[color=gray] (20,18.0) -- (20,28);
\draw[color=gray] (21,18.0) -- (21,28);
\draw[color=gray] (22,18.0) -- (22,28);
\draw[color=gray] (23,18.0) -- (23,28);
\draw[color=gray] (24,18.0) -- (24,28);
\draw[color=gray] (25,18.0) -- (25,28);
\draw[color=gray] (26,18.0) -- (26,28);
\draw[color=gray] (27,18.0) -- (27,28);
\draw[color=gray] (28,18.0) -- (28,28);
\draw[color=gray] (29,18.0) -- (29,28);
\draw[color=gray] (30,18.0) -- (30,28);
\draw[color=gray] (31,18.0) -- (31,28);
\draw[color=gray] (32,18.0) -- (32,28);
\draw[color=gray] (33,18.0) -- (33,28);
\draw[color=gray] (34,18.0) -- (34,28);
\draw[color=gray] (35,18.0) -- (35,28);
\draw[color=gray] (36,18.0) -- (36,28);
\draw[color=gray] (37,18.0) -- (37,28);
\draw[color=gray] (38,18.0) -- (38,28);
\draw[color=gray] (39,18.0) -- (39,28);
\draw[color=gray] (40,18.0) -- (40,28);
\draw[color=gray] (41,18.0) -- (41,28);
\draw[color=gray] (42,18.0) -- (42,28);
\draw[color=gray] (43,18.0) -- (43,28);
\draw[color=gray] (44,18.0) -- (44,28);
\draw[color=gray] (45,18.0) -- (45,28);
\draw[color=gray] (46,18.0) -- (46,28);
\draw[color=gray] (47,18.0) -- (47,28);
\draw[color=gray] (48,18.0) -- (48,28);
\draw[color=gray] (49,18.0) -- (49,28);
\draw[color=gray] (50,18.0) -- (50,28);
\draw[color=gray] (51,18.0) -- (51,28);
\draw[color=gray] (52,18.0) -- (52,28);
\draw[color=gray] (53,18.0) -- (53,28);
\draw[color=gray] (54,18.0) -- (54,28);
\draw[color=gray] (55,18.0) -- (55,28);
\draw[color=gray] (56,18.0) -- (56,28);
\draw[color=gray] (57,18.0) -- (57,28);
\draw[color=gray] (58,18.0) -- (58,28);
\draw[color=gray] (59,18.0) -- (59,28);
\draw[color=gray] (60,18.0) -- (60,28);
\draw[color=gray] (61,18.0) -- (61,28);
\draw[color=gray] (62,18.0) -- (62,28);
\draw[color=gray] (63,18.0) -- (63,28);
\draw[color=gray] (64,18.0) -- (64,28);
\draw[color=gray] (65,18.0) -- (65,28);
\draw[color=gray] (66,18.0) -- (66,28);
\draw[color=gray] (67,18.0) -- (67,28);
\draw[color=gray] (68,18.0) -- (68,28);
\draw[color=gray] (69,18.0) -- (69,28);
\draw[color=gray] (70,18.0) -- (70,28);
\draw[color=gray] (71,18.0) -- (71,28);
\draw[color=gray] (72,18.0) -- (72,28);
\draw[color=gray] (73,18.0) -- (73,28);
\draw[color=gray] (74,18.0) -- (74,28);
\draw[color=gray] (75,18.0) -- (75,28);
\draw[color=gray] (76,18.0) -- (76,28);
\draw[color=gray] (77,18.0) -- (77,28);
\draw[color=gray] (78,18.0) -- (78,28);
\draw[color=gray] (79,18.0) -- (79,28);
\draw[color=gray] (80,18.0) -- (80,28);
\draw[color=gray] (81,18.0) -- (81,28);
\draw[color=gray] (82,18.0) -- (82,28);
\draw[color=gray] (83,18.0) -- (83,28);
\draw[color=gray] (84,18.0) -- (84,28);
\draw[color=gray] (85,18.0) -- (85,28);
\draw[color=gray] (86,18.0) -- (86,28);
\draw[color=gray] (87,18.0) -- (87,28);
\draw[color=gray] (88,18.0) -- (88,28);
\draw[color=gray] (89,18.0) -- (89,28);
\draw[color=gray] (90,18.0) -- (90,28);
\draw[color=gray] (91,18.0) -- (91,28);
\draw[color=gray] (92,18.0) -- (92,28);
\draw[color=gray] (93,18.0) -- (93,28);
\draw[color=gray] (94,18.0) -- (94,28);
\draw[color=gray] (95,18.0) -- (95,28);
\draw[color=gray] (96,18.0) -- (96,28);

\draw[color=black] (8.000000,17) -- (30.000000,17);
\draw[color=black] (8.000000,17) -- (8.000000,17.5);
\node[draw=none,fill=none,anchor=mid] at (19,15.5) {$x_1$};
\draw[color=black] (30.000000,17) -- (52.000000,17);
\draw[color=black] (30.000000,17) -- (30.000000,17.5);
\node[draw=none,fill=none,anchor=mid] at (41,15.5) {$\neg x_1$};
\draw[color=black] (52.000000,17) -- (74.000000,17);
\draw[color=black] (52.000000,17) -- (52.000000,17.5);
\node[draw=none,fill=none,anchor=mid] at (63,15.5) {$x_2$};
\draw[color=black] (74.000000,17) -- (96.000000,17);
\draw[color=black] (74.000000,17) -- (74.000000,17.5);
\node[draw=none,fill=none,anchor=mid] at (85,15.5) {$\neg x_2$};
\draw[color=black] (96.000000,17) -- (96.000000,17.5);
\node[draw=none,fill=none,anchor=east] at (7,15.5) {  $t:$};
\node[draw=none,fill=none,anchor=east] at (7,19) {  $\Cl{l}{2}$};
\node[draw=none,fill=none,anchor=east] at (7,22) {  $\Cl{l}{1}$};
\node[draw=none,fill=none,anchor=east] at (7,26) {  $\Lit{l}$};
\node[draw=none,fill=none] at (8,15.5) {  $0$};
\node[draw=none,fill=none] at (30,15.5) {  $22$};
\node[draw=none,fill=none] at (52,15.5) {  $44$};
\node[draw=none,fill=none] at (74,15.5) {  $66$};
\node[draw=none,fill=none] at (96,15.5) {  $88$};

% blocks background v1, v2
\fill[color=blue, opacity=0.1] (8,28) -- (17,28) -- (17,24) -- (8,24) -- cycle;
\fill[opacity=0.1]             (30,28) -- (39,28) -- (39,24) -- (30,24) -- cycle;
\fill[color=blue, opacity=0.1] (52,28) -- (61,28) -- (61,24) -- (52,24) -- cycle;
\fill[opacity=0.1]             (74,28) -- (83,28) -- (83,24) -- (74,24) -- cycle;

% blocks background C1
\fill[color=blue, opacity=0.1] (16,23) -- (23,23) -- (23,21) -- (16,21) -- cycle;
\fill[opacity=0.1]             (38,23) -- (45,23) -- (45,21) -- (38,21) -- cycle;
\fill[color=blue, opacity=0.1] (60,23) -- (67,23) -- (67,21) -- (60,21) -- cycle;
\fill[opacity=0.1]             (82,23) -- (89,23) -- (89,21) -- (82,21) -- cycle;

% blocks background C2
\fill[opacity=0.1]             (22,20) -- (29,20) -- (29,18) -- (22,18) -- cycle;
\fill[color=blue, opacity=0.1] (44,20) -- (51,20) -- (51,18) -- (44,18) -- cycle;
\fill[color=blue, opacity=0.1] (66,20) -- (73,20) -- (73,18) -- (66,18) -- cycle;
\fill[opacity=0.1]             (88,20) -- (95,20) -- (95,18) -- (88,18) -- cycle;

\draw[color=blue,very thick] (9,27.5) -- (12,27.5);
\fill[color=blue] (9,27.1) -- (11,27.1) -- (11,27.9) -- (9,27.9) -- cycle;
\draw[color=black,line width=1.5pt] (9,27) -- (9,28);
\draw[color=black,line width=1.5pt] (12,27) -- (12,28);
\draw[color=blue,very thick] (8,26.5) -- (17,26.5);
\fill[color=blue] (8,26.1) -- (10,26.1) -- (10,26.9) -- (8,26.9) -- cycle;
\draw[color=black,line width=1.5pt] (8,26) -- (8,27);
\draw[color=black,line width=1.5pt] (17,26) -- (17,27);
\draw[color=blue,very thick] (33,27.5) -- (36,27.5);
\fill[color=blue] (33,27.1) -- (35,27.1) -- (35,27.9) -- (33,27.9) -- cycle;
\draw[color=black,line width=1.5pt] (33,27) -- (33,28);
\draw[color=black,line width=1.5pt] (36,27) -- (36,28);
\draw[color=red,very thick] (30,26.5) -- (39,26.5);
\fill[color=red] (30,26.1) -- (34,26.1) -- (34,26.9) -- (30,26.9) -- cycle;
\draw[color=black,line width=1.5pt] (30,26) -- (30,27);
\draw[color=black,line width=1.5pt] (39,26) -- (39,27);
\draw[color=blue,very thick] (53,27.5) -- (56,27.5);
\fill[color=blue] (53,27.1) -- (55,27.1) -- (55,27.9) -- (53,27.9) -- cycle;
\draw[color=black,line width=1.5pt] (53,27) -- (53,28);
\draw[color=black,line width=1.5pt] (56,27) -- (56,28);
\draw[color=blue,very thick] (52,26.5) -- (61,26.5);
\fill[color=blue] (52,26.1) -- (54,26.1) -- (54,26.9) -- (52,26.9) -- cycle;
\draw[color=black,line width=1.5pt] (52,26) -- (52,27);
\draw[color=black,line width=1.5pt] (61,26) -- (61,27);
\draw[color=blue,very thick] (77,27.5) -- (80,27.5);
\fill[color=blue] (77,27.1) -- (79,27.1) -- (79,27.9) -- (77,27.9) -- cycle;
\draw[color=black,line width=1.5pt] (77,27) -- (77,28);
\draw[color=black,line width=1.5pt] (80,27) -- (80,28);
\draw[color=red,very thick] (74,26.5) -- (83,26.5);
\fill[color=red] (74,26.1) -- (78,26.1) -- (78,26.9) -- (74,26.9) -- cycle;
\draw[color=black,line width=1.5pt] (74,26) -- (74,27);
\draw[color=black,line width=1.5pt] (83,26) -- (83,27);

% v1
\draw[color=red] (8,25.5) -- (52,25.5);
\draw[color=blue] (8,24.5) -- (74,24.5);
\draw[color=red,line width=1.5pt] (8,25.5) -- (16,25.5);
\fill[color=red] (8,25.1) -- (12,25.1) -- (12,25.9) -- (8,25.9) -- cycle;
\draw[color=black,line width=1.5pt] (15,25.000000) -- (15,26.000000);
\draw[color=black,line width=1.5pt] (16,25.000000) -- (16,26.000000);
\draw[color=blue,line width=1.5pt] (30,24.5) -- (38,24.5);
\fill[color=blue] (30,24.1) -- (32,24.1) -- (32,24.9) -- (30,24.9) -- cycle;
\draw[color=black,line width=1.5pt] (32,24.000000) -- (32,25.000000);
\draw[color=black,line width=1.5pt] (38,24.000000) -- (38,25.000000);
\draw[color=black,dashed] (16,25.5)--(30,24.5);
% v2
\draw[color=red,line width=1.5pt] (52,25.5) -- (60,25.5);
\fill[color=red] (52,25.1) -- (56,25.1) -- (56,25.9) -- (52,25.9) -- cycle;
\draw[color=black,line width=1.5pt] (59,25.000000) -- (59,26.000000);
\draw[color=black,line width=1.5pt] (60,25.000000) -- (60,26.000000);
\draw[color=blue,line width=1.5pt] (74,24.5) -- (82,24.5);
\fill[color=blue] (74,24.1) -- (76,24.1) -- (76,24.9) -- (74,24.9) -- cycle;
\draw[color=black,line width=1.5pt] (76,24.000000) -- (76,25.000000);
\draw[color=black,line width=1.5pt] (82,24.000000) -- (82,25.000000);
\draw[color=black,dashed] (60,25.5)--(74,24.5);

% C1
\draw[color=red] (8,22.5) -- (82,22.5);
\draw[color=blue] (8,21.5) -- (82,21.5);
\draw[color=red,line width=1.5pt] (16,22.5) -- (23,22.5);
\fill[color=red] (16,22.1) -- (20,22.1) -- (20,22.9) -- (16,22.9) -- cycle;
\draw[color=black,line width=1.5pt] (22,22.000000) -- (22,23.000000);
\draw[color=black,line width=1.5pt] (23,22.000000) -- (23,23.000000);
\draw[color=blue,line width=1.5pt] (16,21.5) -- (22,21.5);
\fill[color=blue] (16,21.1) -- (18,21.1) -- (18,21.9) -- (16,21.9) -- cycle;
\draw[color=black,line width=1.5pt] (22,21.000000) -- (22,22.000000);
%\draw[color=black,line width=1.5pt] (23,19.000000) -- (23,20.000000);
\draw[color=black,dashed] (23,22.5)--(38,21.5);

\draw[color=blue,line width=1.5pt] (38,21.5) -- (45,21.5);
\fill[color=blue] (38,21.1) -- (40,21.1) -- (40,21.9) -- (38,21.9) -- cycle;
\draw[color=black,line width=1.5pt] (43,21.000000) -- (43,22.000000);
\draw[color=black,line width=1.5pt] (45,21.000000) -- (45,22.000000);
\draw[color=red,line width=1.5pt] (38,22.5) -- (45,22.5);
\fill[color=red] (38,22.1) -- (42,22.1) -- (42,22.9) -- (38,22.9) -- cycle;
\draw[color=black,line width=1.5pt] (43,22.000000) -- (43,23.000000);
\draw[color=black,line width=1.5pt] (45,22.000000) -- (45,23.000000);
\draw[color=black,dashed] (45,22.5)--(60,21.5);

\draw[color=red,line width=1.5pt] (60,22.5) -- (67,22.5);
\fill[color=red] (60,22.1) -- (64,22.1) -- (64,22.9) -- (60,22.9) -- cycle;
\draw[color=black,line width=1.5pt] (66,22.000000) -- (66,23.000000);
\draw[color=black,line width=1.5pt] (67,22.000000) -- (67,23.000000);
\draw[color=blue,line width=1.5pt] (60,21.5) -- (67,21.5);
\fill[color=blue] (60,21.1) -- (62,21.1) -- (62,21.9) -- (60,21.9) -- cycle;
\draw[color=black,line width=1.5pt] (66,21.000000) -- (66,22.000000);
\draw[color=black,line width=1.5pt] (67,21.000000) -- (67,22.000000);
\draw[color=black,dashed] (67,22.5)--(82,21.5);

\draw[color=blue,line width=1.5pt] (82,21.5) -- (89,21.5);
\fill[color=blue] (82,21.1) -- (84,21.1) -- (84,21.9) -- (82,21.9) -- cycle;
\draw[color=black,line width=1.5pt] (87,21.000000) -- (87,22.000000);
\draw[color=black,line width=1.5pt] (89,21.000000) -- (89,22.000000);
\draw[color=red,line width=1.5pt] (82,22.5) -- (87,22.5);
\fill[color=red] (82,22.1) -- (86,22.1) -- (86,22.9) -- (82,22.9) -- cycle;
\draw[color=black,line width=1.5pt] (87,22.000000) -- (87,23.000000);
%\draw[color=black,line width=1.5pt] (89,12.000000) -- (89,13.000000);

% C2
\draw[color=red] (8,19.5) -- (88,19.5);
\draw[color=blue] (8,18.5) -- (88,18.5);
\draw[color=blue,line width=1.5pt] (22,18.5) -- (27,18.5);
\fill[color=blue] (22,18.1) -- (24,18.1) -- (24,18.9) -- (22,18.9) -- cycle;
\draw[color=black,line width=1.5pt] (27,18.000000) -- (27,19.000000);
%\draw[color=black,line width=1.5pt] (29,8.000000) -- (29,9.000000);
\draw[color=red,line width=1.5pt] (22,19.5) -- (29,19.5);
\fill[color=red] (22,19.1) -- (26,19.1) -- (26,19.9) -- (22,19.9) -- cycle;
\draw[color=black,line width=1.5pt] (27,19.000000) -- (27,20.000000);
\draw[color=black,line width=1.5pt] (29,19.000000) -- (29,20.000000);
\draw[color=black,dashed] (29,19.5)--(44,18.5);

\draw[color=blue,line width=1.5pt] (44,18.5) -- (51,18.5);
\fill[color=blue] (44,18.1) -- (46,18.1) -- (46,18.9) -- (44,18.9) -- cycle;
\draw[color=black,line width=1.5pt] (50,18.000000) -- (50,19.000000);
\draw[color=black,line width=1.5pt] (51,18.000000) -- (51,19.000000);
\draw[color=red,line width=1.5pt] (44,19.5) -- (51,19.5);
\fill[color=red] (44,19.1) -- (48,19.1) -- (48,19.9) -- (44,19.9) -- cycle;
\draw[color=black,line width=1.5pt] (50,19.000000) -- (50,20.000000);
\draw[color=black,line width=1.5pt] (51,19.000000) -- (51,20.000000);
\draw[color=black,dashed] (51,19.5)--(66,18.5);

\draw[color=blue,line width=1.5pt] (66,18.5) -- (73,18.5);
\fill[color=blue] (66,18.1) -- (68,18.1) -- (68,18.9) -- (66,18.9) -- cycle;
\draw[color=black,line width=1.5pt] (72,18.000000) -- (72,19.000000);
\draw[color=black,line width=1.5pt] (73,18.000000) -- (73,19.000000);
\draw[color=red,line width=1.5pt] (66,19.5) -- (73,19.5);
\fill[color=red] (66,19.1) -- (70,19.1) -- (70,19.9) -- (66,19.9) -- cycle;
\draw[color=black,line width=1.5pt] (72,19.000000) -- (72,20.000000);
\draw[color=black,line width=1.5pt] (73,19.000000) -- (73,20.000000);
\draw[color=black,dashed] (73,19.5)--(88,18.5);

\draw[color=blue,line width=1.5pt] (88,18.5) -- (95,18.5);
\fill[color=blue] (88,18.1) -- (90,18.1) -- (90,18.9) -- (88,18.9) -- cycle;
\draw[color=black,line width=1.5pt] (93,18.000000) -- (93,19.000000);
\draw[color=black,line width=1.5pt] (95,18.000000) -- (95,19.000000);
\draw[color=red,line width=1.5pt] (88,19.5) -- (93,19.5);
\fill[color=red] (88,19.1) -- (92,19.1) -- (92,19.9) -- (88,19.9) -- cycle;
\draw[color=black,line width=1.5pt] (93,19.000000) -- (93,20.000000);
%\draw[color=black,line width=1.5pt] (95,1.000000) -- (95,2.000000);
%\draw[color=black] (0.000000,1.500000) -- (-1.000000,1.500000) -- (-1.000000,0.500000) -- (0.000000,0.500000);

\fill[color=black] (29,18) -- (29,28) -- (30,28) -- (30,18) -- cycle;
\fill[color=black] (51,18) -- (51,28) -- (52,28) -- (52,18) -- cycle;
\fill[color=black] (73,18) -- (73,28) -- (74,28) -- (74,18) -- cycle;
\fill[color=black] (95,18) -- (95,28) -- (96,28) -- (96,18) -- cycle;

\end{tikzpicture}
}\caption{\label{fig:helereductie}
This visualizes the auxiliary scheduling problem that is the reduced instance of the example satisfiability problem with two variables and two clauses.
Multiple jobs with release time 0 are positioned in the same row. The thin red or blue line represents the initial availability interval of the last job in the row; for all other jobs, the initial availability is from their release time at 0 until to their (last) deadline.
%the part of the line to the left of the job's position is also part of their availability interval.
%for $n=2$ variables $x_1$, $x_2$ and $m=2$ clauses, $C_1 = (x_1\vee x_2)$ and $C_2 = (\neg x_1 \vee x_2)$, with $p=4$, $q=2$ and $w=1$ for the separators.
The thick red or blue lines represent the part of the availability interval that remains for each job in any feasible schedule, and all jobs are shown at their earliest possible start time.
%when we consider the relaxed scheduling problem where all early deadlines are set to equal the late deadline.
The pairs of long and short pending jobs that form connected pairs are connected by a dashed line.
The first four lines show from left to right the literal blocks~$\Lit{x_1}$, $\Lit{\neg x_1}$, $\Lit{x_2}$, and $\Lit{\neg x_2}$, representing the literals themselves.
The next two lines show the clause blocks~$\Cl{l}{1}$ representing $C_1$; the last two lines show the clause blocks~$\Cl{l}{2}$ representing $C_2$.
In $\Cl{x_1}{1}$ and $\Cl{x_1}{2}$ the short pending job is not paired, and in $\Cl{\neg x_2}{1}$ and $\Cl{\neg x_2}{2}$ the long pending job is not paired.
%The deadlines of pending jobs in $\Cl{l}{j}$ differ by one if $l \in C_j$ and by two otherwise.
}
\end{figure*}

\begin{figure*}\centering
\makebox[\textwidth][c]{
\begin{tikzpicture}[xscale=0.15,yscale=0.25]
\draw[color=black] (8,18.0) -- (8,28);
\draw[color=gray] (9,18.0) -- (9,28);
\draw[color=gray] (10,18.0) -- (10,28);
\draw[color=gray] (11,18.0) -- (11,28);
\draw[color=gray] (12,18.0) -- (12,28);
\draw[color=gray] (13,18.0) -- (13,28);
\draw[color=gray] (14,18.0) -- (14,28);
\draw[color=gray] (15,18.0) -- (15,28);
\draw[color=gray] (16,18.0) -- (16,28);
\draw[color=gray] (17,18.0) -- (17,28);
\draw[color=gray] (18,18.0) -- (18,28);
\draw[color=gray] (19,18.0) -- (19,28);
\draw[color=gray] (20,18.0) -- (20,28);
\draw[color=gray] (21,18.0) -- (21,28);
\draw[color=gray] (22,18.0) -- (22,28);
\draw[color=gray] (23,18.0) -- (23,28);
\draw[color=gray] (24,18.0) -- (24,28);
\draw[color=gray] (25,18.0) -- (25,28);
\draw[color=gray] (26,18.0) -- (26,28);
\draw[color=gray] (27,18.0) -- (27,28);
\draw[color=gray] (28,18.0) -- (28,28);
\draw[color=gray] (29,18.0) -- (29,28);
\draw[color=gray] (30,18.0) -- (30,28);
\draw[color=gray] (31,18.0) -- (31,28);
\draw[color=gray] (32,18.0) -- (32,28);
\draw[color=gray] (33,18.0) -- (33,28);
\draw[color=gray] (34,18.0) -- (34,28);
\draw[color=gray] (35,18.0) -- (35,28);
\draw[color=gray] (36,18.0) -- (36,28);
\draw[color=gray] (37,18.0) -- (37,28);
\draw[color=gray] (38,18.0) -- (38,28);
\draw[color=gray] (39,18.0) -- (39,28);
\draw[color=gray] (40,18.0) -- (40,28);
\draw[color=gray] (41,18.0) -- (41,28);
\draw[color=gray] (42,18.0) -- (42,28);
\draw[color=gray] (43,18.0) -- (43,28);
\draw[color=gray] (44,18.0) -- (44,28);
\draw[color=gray] (45,18.0) -- (45,28);
\draw[color=gray] (46,18.0) -- (46,28);
\draw[color=gray] (47,18.0) -- (47,28);
\draw[color=gray] (48,18.0) -- (48,28);
\draw[color=gray] (49,18.0) -- (49,28);
\draw[color=gray] (50,18.0) -- (50,28);
\draw[color=gray] (51,18.0) -- (51,28);
\draw[color=gray] (52,18.0) -- (52,28);
\draw[color=gray] (53,18.0) -- (53,28);
\draw[color=gray] (54,18.0) -- (54,28);
\draw[color=gray] (55,18.0) -- (55,28);
\draw[color=gray] (56,18.0) -- (56,28);
\draw[color=gray] (57,18.0) -- (57,28);
\draw[color=gray] (58,18.0) -- (58,28);
\draw[color=gray] (59,18.0) -- (59,28);
\draw[color=gray] (60,18.0) -- (60,28);
\draw[color=gray] (61,18.0) -- (61,28);
\draw[color=gray] (62,18.0) -- (62,28);
\draw[color=gray] (63,18.0) -- (63,28);
\draw[color=gray] (64,18.0) -- (64,28);
\draw[color=gray] (65,18.0) -- (65,28);
\draw[color=gray] (66,18.0) -- (66,28);
\draw[color=gray] (67,18.0) -- (67,28);
\draw[color=gray] (68,18.0) -- (68,28);
\draw[color=gray] (69,18.0) -- (69,28);
\draw[color=gray] (70,18.0) -- (70,28);
\draw[color=gray] (71,18.0) -- (71,28);
\draw[color=gray] (72,18.0) -- (72,28);
\draw[color=gray] (73,18.0) -- (73,28);
\draw[color=gray] (74,18.0) -- (74,28);
\draw[color=gray] (75,18.0) -- (75,28);
\draw[color=gray] (76,18.0) -- (76,28);
\draw[color=gray] (77,18.0) -- (77,28);
\draw[color=gray] (78,18.0) -- (78,28);
\draw[color=gray] (79,18.0) -- (79,28);
\draw[color=gray] (80,18.0) -- (80,28);
\draw[color=gray] (81,18.0) -- (81,28);
\draw[color=gray] (82,18.0) -- (82,28);
\draw[color=gray] (83,18.0) -- (83,28);
\draw[color=gray] (84,18.0) -- (84,28);
\draw[color=gray] (85,18.0) -- (85,28);
\draw[color=gray] (86,18.0) -- (86,28);
\draw[color=gray] (87,18.0) -- (87,28);
\draw[color=gray] (88,18.0) -- (88,28);
\draw[color=gray] (89,18.0) -- (89,28);
\draw[color=gray] (90,18.0) -- (90,28);
\draw[color=gray] (91,18.0) -- (91,28);
\draw[color=gray] (92,18.0) -- (92,28);
\draw[color=gray] (93,18.0) -- (93,28);
\draw[color=gray] (94,18.0) -- (94,28);
\draw[color=gray] (95,18.0) -- (95,28);
\draw[color=gray] (96,18.0) -- (96,28);

\draw[color=black] (8.000000,17) -- (30.000000,17);
\draw[color=black] (8.000000,17) -- (8.000000,17.5);
\node[draw=none,fill=none,anchor=mid] at (19,15.5) {$x_1$};
\draw[color=black] (30.000000,17) -- (52.000000,17);
\draw[color=black] (30.000000,17) -- (30.000000,17.5);
\node[draw=none,fill=none,anchor=mid] at (41,15.5) {$\neg x_1$};
\draw[color=black] (52.000000,17) -- (74.000000,17);
\draw[color=black] (52.000000,17) -- (52.000000,17.5);
\node[draw=none,fill=none,anchor=mid] at (63,15.5) {$x_2$};
\draw[color=black] (74.000000,17) -- (96.000000,17);
\draw[color=black] (74.000000,17) -- (74.000000,17.5);
\node[draw=none,fill=none,anchor=mid] at (85,15.5) {$\neg x_2$};
\draw[color=black] (96.000000,17) -- (96.000000,17.5);
\node[draw=none,fill=none,anchor=east] at (7,15.5) {  $t:$};
\node[draw=none,fill=none,anchor=east] at (7,19) {  $\Cl{l}{2}$};
\node[draw=none,fill=none,anchor=east] at (7,22) {  $\Cl{l}{1}$};
\node[draw=none,fill=none,anchor=east] at (7,26) {  $\Lit{l}$};
\node[draw=none,fill=none] at (8,15.5) {  $0$};
\node[draw=none,fill=none] at (30,15.5) {  $22$};
\node[draw=none,fill=none] at (52,15.5) {  $44$};
\node[draw=none,fill=none] at (74,15.5) {  $66$};
\node[draw=none,fill=none] at (96,15.5) {  $88$};

\fill[opacity=0.5] (9,28) -- (17,28) -- (17,18) -- (9,18) -- cycle;
\fill[opacity=0.5] (30,28) -- (38,28) -- (38,18) -- (30,18) -- cycle;
\fill[opacity=0.5] (53,28) -- (61,28) -- (61,18) -- (53,18) -- cycle;
\fill[opacity=0.5] (74,28) -- (82,28) -- (82,18) -- (74,18) -- cycle;

\draw[color=blue,very thick] (9,27.5) -- (12,27.5);
\fill[color=blue] (9,27.1) -- (11,27.1) -- (11,27.9) -- (9,27.9) -- cycle;
\draw[color=black,line width=1.5pt] (9,27) -- (9,28);
\draw[color=black,line width=1.5pt] (12,27) -- (12,28);
\draw[color=blue,very thick] (8,26.5) -- (17,26.5);
\fill[color=blue] (15,26.1) -- (17,26.1) -- (17,26.9) -- (15,26.9) -- cycle;
\draw[color=black,line width=1.5pt] (8,26) -- (8,27);
\draw[color=black,line width=1.5pt] (17,26) -- (17,27);
\draw[color=blue,very thick] (33,27.5) -- (36,27.5);
\fill[color=blue] (34,27.1) -- (36,27.1) -- (36,27.9) -- (34,27.9) -- cycle;
\draw[color=black,line width=1.5pt] (33,27) -- (33,28);
\draw[color=black,line width=1.5pt] (36,27) -- (36,28);
\draw[color=red,very thick] (30,26.5) -- (39,26.5);
\fill[color=red] (30,26.1) -- (34,26.1) -- (34,26.9) -- (30,26.9) -- cycle;
\draw[color=black,line width=1.5pt] (30,26) -- (30,27);
\draw[color=black,line width=1.5pt] (39,26) -- (39,27);
\draw[color=blue,very thick] (53,27.5) -- (56,27.5);
\fill[color=blue] (53,27.1) -- (55,27.1) -- (55,27.9) -- (53,27.9) -- cycle;
\draw[color=black,line width=1.5pt] (53,27) -- (53,28);
\draw[color=black,line width=1.5pt] (56,27) -- (56,28);
\draw[color=blue,very thick] (52,26.5) -- (61,26.5);
\fill[color=blue] (59,26.1) -- (61,26.1) -- (61,26.9) -- (59,26.9) -- cycle;
\draw[color=black,line width=1.5pt] (52,26) -- (52,27);
\draw[color=black,line width=1.5pt] (61,26) -- (61,27);
\draw[color=blue,very thick] (77,27.5) -- (80,27.5);
\fill[color=blue] (78,27.1) -- (80,27.1) -- (80,27.9) -- (78,27.9) -- cycle;
\draw[color=black,line width=1.5pt] (77,27) -- (77,28);
\draw[color=black,line width=1.5pt] (80,27) -- (80,28);
\draw[color=red,very thick] (74,26.5) -- (83,26.5);
\fill[color=red] (74,26.1) -- (78,26.1) -- (78,26.9) -- (74,26.9) -- cycle;
\draw[color=black,line width=1.5pt] (74,26) -- (74,27);
\draw[color=black,line width=1.5pt] (83,26) -- (83,27);

% v1
\draw[color=red] (8,25.5) -- (52,25.5);
\draw[color=blue] (8,24.5) -- (74,24.5);
\draw[color=red,line width=1.5pt] (8,25.5) -- (16,25.5);
\fill[color=red] (11,25.1) -- (15,25.1) -- (15,25.9) -- (11,25.9) -- cycle;
\draw[color=black,line width=1.5pt] (15,25.000000) -- (15,26.000000);
\draw[color=black,line width=1.5pt] (16,25.000000) -- (16,26.000000);
\draw[color=blue,line width=1.5pt] (30,24.5) -- (38,24.5);
\fill[color=blue] (36,24.1) -- (38,24.1) -- (38,24.9) -- (36,24.9) -- cycle;
\draw[color=black,line width=1.5pt] (32,24.000000) -- (32,25.000000);
\draw[color=black,line width=1.5pt] (38,24.000000) -- (38,25.000000);
\draw[color=black,dashed] (16,25.5)--(30,24.5);
% v2
\draw[color=red,line width=1.5pt] (52,25.5) -- (60,25.5);
\fill[color=red] (55,25.1) -- (59,25.1) -- (59,25.9) -- (55,25.9) -- cycle;
\draw[color=black,line width=1.5pt] (59,25.000000) -- (59,26.000000);
\draw[color=black,line width=1.5pt] (60,25.000000) -- (60,26.000000);
\draw[color=blue,line width=1.5pt] (74,24.5) -- (82,24.5);
\fill[color=blue] (80,24.1) -- (82,24.1) -- (82,24.9) -- (80,24.9) -- cycle;
\draw[color=black,line width=1.5pt] (76,24.000000) -- (76,25.000000);
\draw[color=black,line width=1.5pt] (82,24.000000) -- (82,25.000000);
\draw[color=black,dashed] (60,25.5)--(74,24.5);

% C1
\draw[color=red] (8,22.5) -- (82,22.5);
\draw[color=blue] (8,21.5) -- (82,21.5);
\draw[color=red,line width=1.5pt] (17,22.5) -- (23,22.5);
\fill[color=red] (17,22.1) -- (21,22.1) -- (21,22.9) -- (17,22.9) -- cycle;
\draw[color=black,line width=1.5pt] (22,22.000000) -- (22,23.000000);
\draw[color=black,line width=1.5pt] (23,22.000000) -- (23,23.000000);
\draw[color=blue,line width=1.5pt] (17,21.5) -- (22,21.5);
\fill[color=blue] (17,21.1) -- (19,21.1) -- (19,21.9) -- (17,21.9) -- cycle;
\draw[color=black,line width=1.5pt] (22,21.000000) -- (22,22.000000);
%\draw[color=black,line width=1.5pt] (23,19.000000) -- (23,20.000000);
\draw[color=black,dashed] (23,22.5)--(38,21.5);

\draw[color=blue,line width=1.5pt] (38,21.5) -- (45,21.5);
\fill[color=blue] (38,21.1) -- (40,21.1) -- (40,21.9) -- (38,21.9) -- cycle;
\draw[color=black,line width=1.5pt] (43,21.000000) -- (43,22.000000);
\draw[color=black,line width=1.5pt] (45,21.000000) -- (45,22.000000);
\draw[color=red,line width=1.5pt] (38,22.5) -- (45,22.5);
\fill[color=red] (38,22.1) -- (42,22.1) -- (42,22.9) -- (38,22.9) -- cycle;
\draw[color=black,line width=1.5pt] (43,22.000000) -- (43,23.000000);
\draw[color=black,line width=1.5pt] (45,22.000000) -- (45,23.000000);
\draw[color=black,dashed] (45,22.5)--(61,21.5);

\draw[color=red,line width=1.5pt] (61,22.5) -- (67,22.5);
\fill[color=red] (61,22.1) -- (65,22.1) -- (65,22.9) -- (61,22.9) -- cycle;
\draw[color=black,line width=1.5pt] (66,22.000000) -- (66,23.000000);
\draw[color=black,line width=1.5pt] (67,22.000000) -- (67,23.000000);
\draw[color=blue,line width=1.5pt] (61,21.5) -- (67,21.5);
\fill[color=blue] (61,21.1) -- (63,21.1) -- (63,21.9) -- (61,21.9) -- cycle;
\draw[color=black,line width=1.5pt] (66,21.000000) -- (66,22.000000);
\draw[color=black,line width=1.5pt] (67,21.000000) -- (67,22.000000);
\draw[color=black,dashed] (67,22.5)--(82,21.5);

\draw[color=blue,line width=1.5pt] (82,21.5) -- (89,21.5);
\fill[color=blue] (82,21.1) -- (84,21.1) -- (84,21.9) -- (82,21.9) -- cycle;
\draw[color=black,line width=1.5pt] (87,21.000000) -- (87,22.000000);
\draw[color=black,line width=1.5pt] (89,21.000000) -- (89,22.000000);
\draw[color=red,line width=1.5pt] (82,22.5) -- (87,22.5);
\fill[color=red] (82,22.1) -- (86,22.1) -- (86,22.9) -- (82,22.9) -- cycle;
\draw[color=black,line width=1.5pt] (87,22.000000) -- (87,23.000000);
%\draw[color=black,line width=1.5pt] (89,12.000000) -- (89,13.000000);

% C2
\draw[color=red] (8,19.5) -- (88,19.5);
\draw[color=blue] (8,18.5) -- (88,18.5);
\draw[color=blue,line width=1.5pt] (23,18.5) -- (27,18.5);
\fill[color=blue] (23,18.1) -- (25,18.1) -- (25,18.9) -- (23,18.9) -- cycle;
\draw[color=black,line width=1.5pt] (27,18.000000) -- (27,19.000000);
%\draw[color=black,line width=1.5pt] (29,8.000000) -- (29,9.000000);
\draw[color=red,line width=1.5pt] (23,19.5) -- (29,19.5);
\fill[color=red] (23,19.1) -- (27,19.1) -- (27,19.9) -- (23,19.9) -- cycle;
\draw[color=black,line width=1.5pt] (27,19.000000) -- (27,20.000000);
\draw[color=black,line width=1.5pt] (29,19.000000) -- (29,20.000000);
\draw[color=black,dashed] (29,19.5)--(44,18.5);

\draw[color=blue,line width=1.5pt] (44,18.5) -- (51,18.5);
\fill[color=blue] (44,18.1) -- (46,18.1) -- (46,18.9) -- (44,18.9) -- cycle;
\draw[color=black,line width=1.5pt] (50,18.000000) -- (50,19.000000);
\draw[color=black,line width=1.5pt] (51,18.000000) -- (51,19.000000);
\draw[color=red,line width=1.5pt] (44,19.5) -- (51,19.5);
\fill[color=red] (44,19.1) -- (48,19.1) -- (48,19.9) -- (44,19.9) -- cycle;
\draw[color=black,line width=1.5pt] (50,19.000000) -- (50,20.000000);
\draw[color=black,line width=1.5pt] (51,19.000000) -- (51,20.000000);
\draw[color=black,dashed] (51,19.5)--(67,18.5);

\draw[color=blue,line width=1.5pt] (67,18.5) -- (73,18.5);
\fill[color=blue] (67,18.1) -- (69,18.1) -- (69,18.9) -- (67,18.9) -- cycle;
\draw[color=black,line width=1.5pt] (72,18.000000) -- (72,19.000000);
\draw[color=black,line width=1.5pt] (73,18.000000) -- (73,19.000000);
\draw[color=red,line width=1.5pt] (67,19.5) -- (73,19.5);
\fill[color=red] (67,19.1) -- (71,19.1) -- (71,19.9) -- (67,19.9) -- cycle;
\draw[color=black,line width=1.5pt] (72,19.000000) -- (72,20.000000);
\draw[color=black,line width=1.5pt] (73,19.000000) -- (73,20.000000);
\draw[color=black,dashed] (73,19.5)--(88,18.5);

\draw[color=blue,line width=1.5pt] (88,18.5) -- (95,18.5);
\fill[color=blue] (88,18.1) -- (90,18.1) -- (90,18.9) -- (88,18.9) -- cycle;
\draw[color=black,line width=1.5pt] (93,18.000000) -- (93,19.000000);
\draw[color=black,line width=1.5pt] (95,18.000000) -- (95,19.000000);
\draw[color=red,line width=1.5pt] (88,19.5) -- (93,19.5);
\fill[color=red] (88,19.1) -- (92,19.1) -- (92,19.9) -- (88,19.9) -- cycle;
\draw[color=black,line width=1.5pt] (93,19.000000) -- (93,20.000000);
%\draw[color=black,line width=1.5pt] (95,1.000000) -- (95,2.000000);
%\draw[color=black] (0.000000,1.500000) -- (-1.000000,1.500000) -- (-1.000000,0.500000) -- (0.000000,0.500000);

\fill[color=black] (29,18) -- (29,28) -- (30,28) -- (30,18) -- cycle;
\fill[color=black] (51,18) -- (51,28) -- (52,28) -- (52,18) -- cycle;
\fill[color=black] (73,18) -- (73,28) -- (74,28) -- (74,18) -- cycle;
\fill[color=black] (95,18) -- (95,28) -- (96,28) -- (96,18) -- cycle;

\end{tikzpicture}
}\caption{\label{fig:reductie-assignment-allebeifalse}
For the same auxiliary scheduling problem instance as in Figure~\ref{fig:helereductie}, this partial schedule shows why setting $x_1=\textsc{false}$ and $x_2=\textsc{false}$ cannot lead to a feasible schedule.
In this partial schedule, we have fixed the jobs in each $\Lit{l}$ accordingly (note that the section where the literal is set to true has no idle time), and have marked in gray the space that is consumed by the jobs from these $\Lit{l}$ blocks.
The earliest starting times of the remaining jobs have been adjusted to this partial schedule.
This partial schedule cannot be extended to a feasible schedule because the jobs in the clause blocks~$\Cl{l}{1}$ of clause~$C_1 = (x_1 \vee x_2)$ cannot be scheduled without overlap.
}
\end{figure*}
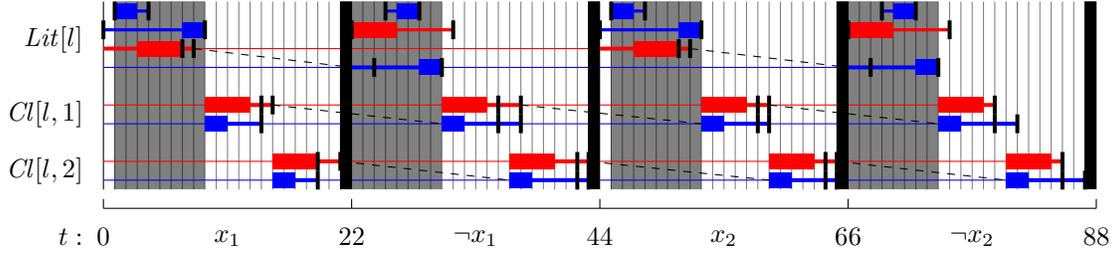

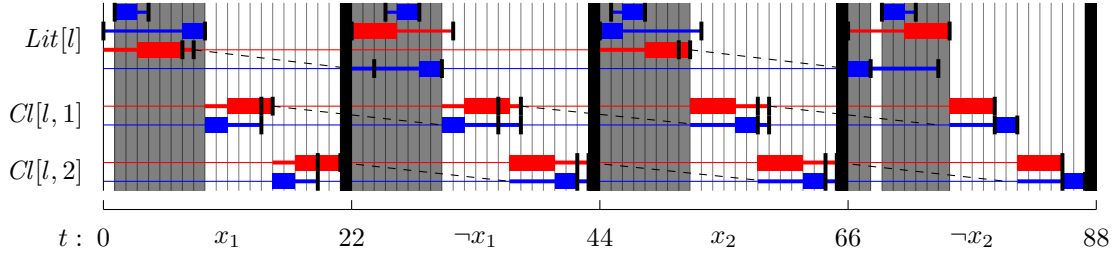
\begin{figure*}\centering
\makebox[\textwidth][c]{
\begin{tikzpicture}[xscale=0.15,yscale=0.25]
\draw[color=black] (8,18.0) -- (8,28);
\draw[color=gray] (9,18.0) -- (9,28);
\draw[color=gray] (10,18.0) -- (10,28);
\draw[color=gray] (11,18.0) -- (11,28);
\draw[color=gray] (12,18.0) -- (12,28);
\draw[color=gray] (13,18.0) -- (13,28);
\draw[color=gray] (14,18.0) -- (14,28);
\draw[color=gray] (15,18.0) -- (15,28);
\draw[color=gray] (16,18.0) -- (16,28);
\draw[color=gray] (17,18.0) -- (17,28);
\draw[color=gray] (18,18.0) -- (18,28);
\draw[color=gray] (19,18.0) -- (19,28);
\draw[color=gray] (20,18.0) -- (20,28);
\draw[color=gray] (21,18.0) -- (21,28);
\draw[color=gray] (22,18.0) -- (22,28);
\draw[color=gray] (23,18.0) -- (23,28);
\draw[color=gray] (24,18.0) -- (24,28);
\draw[color=gray] (25,18.0) -- (25,28);
\draw[color=gray] (26,18.0) -- (26,28);
\draw[color=gray] (27,18.0) -- (27,28);
\draw[color=gray] (28,18.0) -- (28,28);
\draw[color=gray] (29,18.0) -- (29,28);
\draw[color=gray] (30,18.0) -- (30,28);
\draw[color=gray] (31,18.0) -- (31,28);
\draw[color=gray] (32,18.0) -- (32,28);
\draw[color=gray] (33,18.0) -- (33,28);
\draw[color=gray] (34,18.0) -- (34,28);
\draw[color=gray] (35,18.0) -- (35,28);
\draw[color=gray] (36,18.0) -- (36,28);
\draw[color=gray] (37,18.0) -- (37,28);
\draw[color=gray] (38,18.0) -- (38,28);
\draw[color=gray] (39,18.0) -- (39,28);
\draw[color=gray] (40,18.0) -- (40,28);
\draw[color=gray] (41,18.0) -- (41,28);
\draw[color=gray] (42,18.0) -- (42,28);
\draw[color=gray] (43,18.0) -- (43,28);
\draw[color=gray] (44,18.0) -- (44,28);
\draw[color=gray] (45,18.0) -- (45,28);
\draw[color=gray] (46,18.0) -- (46,28);
\draw[color=gray] (47,18.0) -- (47,28);
\draw[color=gray] (48,18.0) -- (48,28);
\draw[color=gray] (49,18.0) -- (49,28);
\draw[color=gray] (50,18.0) -- (50,28);
\draw[color=gray] (51,18.0) -- (51,28);
\draw[color=gray] (52,18.0) -- (52,28);
\draw[color=gray] (53,18.0) -- (53,28);
\draw[color=gray] (54,18.0) -- (54,28);
\draw[color=gray] (55,18.0) -- (55,28);
\draw[color=gray] (56,18.0) -- (56,28);
\draw[color=gray] (57,18.0) -- (57,28);
\draw[color=gray] (58,18.0) -- (58,28);
\draw[color=gray] (59,18.0) -- (59,28);
\draw[color=gray] (60,18.0) -- (60,28);
\draw[color=gray] (61,18.0) -- (61,28);
\draw[color=gray] (62,18.0) -- (62,28);
\draw[color=gray] (63,18.0) -- (63,28);
\draw[color=gray] (64,18.0) -- (64,28);
\draw[color=gray] (65,18.0) -- (65,28);
\draw[color=gray] (66,18.0) -- (66,28);
\draw[color=gray] (67,18.0) -- (67,28);
\draw[color=gray] (68,18.0) -- (68,28);
\draw[color=gray] (69,18.0) -- (69,28);
\draw[color=gray] (70,18.0) -- (70,28);
\draw[color=gray] (71,18.0) -- (71,28);
\draw[color=gray] (72,18.0) -- (72,28);
\draw[color=gray] (73,18.0) -- (73,28);
\draw[color=gray] (74,18.0) -- (74,28);
\draw[color=gray] (75,18.0) -- (75,28);
\draw[color=gray] (76,18.0) -- (76,28);
\draw[color=gray] (77,18.0) -- (77,28);
\draw[color=gray] (78,18.0) -- (78,28);
\draw[color=gray] (79,18.0) -- (79,28);
\draw[color=gray] (80,18.0) -- (80,28);
\draw[color=gray] (81,18.0) -- (81,28);
\draw[color=gray] (82,18.0) -- (82,28);
\draw[color=gray] (83,18.0) -- (83,28);
\draw[color=gray] (84,18.0) -- (84,28);
\draw[color=gray] (85,18.0) -- (85,28);
\draw[color=gray] (86,18.0) -- (86,28);
\draw[color=gray] (87,18.0) -- (87,28);
\draw[color=gray] (88,18.0) -- (88,28);
\draw[color=gray] (89,18.0) -- (89,28);
\draw[color=gray] (90,18.0) -- (90,28);
\draw[color=gray] (91,18.0) -- (91,28);
\draw[color=gray] (92,18.0) -- (92,28);
\draw[color=gray] (93,18.0) -- (93,28);
\draw[color=gray] (94,18.0) -- (94,28);
\draw[color=gray] (95,18.0) -- (95,28);
\draw[color=gray] (96,18.0) -- (96,28);

\draw[color=black] (8.000000,17) -- (30.000000,17);
\draw[color=black] (8.000000,17) -- (8.000000,17.5);
\node[draw=none,fill=none,anchor=mid] at (19,15.5) {$x_1$};
\draw[color=black] (30.000000,17) -- (52.000000,17);
\draw[color=black] (30.000000,17) -- (30.000000,17.5);
\node[draw=none,fill=none,anchor=mid] at (41,15.5) {$\neg x_1$};
\draw[color=black] (52.000000,17) -- (74.000000,17);
\draw[color=black] (52.000000,17) -- (52.000000,17.5);
\node[draw=none,fill=none,anchor=mid] at (63,15.5) {$x_2$};
\draw[color=black] (74.000000,17) -- (96.000000,17);
\draw[color=black] (74.000000,17) -- (74.000000,17.5);
\node[draw=none,fill=none,anchor=mid] at (85,15.5) {$\neg x_2$};
\draw[color=black] (96.000000,17) -- (96.000000,17.5);
\node[draw=none,fill=none,anchor=east] at (7,15.5) {  $t:$};
\node[draw=none,fill=none,anchor=east] at (7,19) {  $\Cl{l}{2}$};
\node[draw=none,fill=none,anchor=east] at (7,22) {  $\Cl{l}{1}$};
\node[draw=none,fill=none,anchor=east] at (7,26) {  $\Lit{l}$};
\node[draw=none,fill=none] at (8,15.5) {  $0$};
\node[draw=none,fill=none] at (30,15.5) {  $22$};
\node[draw=none,fill=none] at (52,15.5) {  $44$};
\node[draw=none,fill=none] at (74,15.5) {  $66$};
\node[draw=none,fill=none] at (96,15.5) {  $88$};

\fill[opacity=0.5] (9,28) -- (17,28) -- (17,18) -- (9,18) -- cycle;
\fill[opacity=0.5] (30,28) -- (38,28) -- (38,18) -- (30,18) -- cycle;
\fill[opacity=0.5] (52,28) -- (60,28) -- (60,18) -- (52,18) -- cycle;
\fill[opacity=0.5] (74,28) -- (76,28) -- (76,18) -- (74,18) -- cycle;
\fill[opacity=0.5] (77,28) -- (83,28) -- (83,18) -- (77,18) -- cycle;

\draw[color=blue,very thick] (9,27.5) -- (12,27.5);
\fill[color=blue] (9,27.1) -- (11,27.1) -- (11,27.9) -- (9,27.9) -- cycle;
\draw[color=black,line width=1.5pt] (9,27) -- (9,28);
\draw[color=black,line width=1.5pt] (12,27) -- (12,28);
\draw[color=blue,very thick] (8,26.5) -- (17,26.5);
\fill[color=blue] (15,26.1) -- (17,26.1) -- (17,26.9) -- (15,26.9) -- cycle;
\draw[color=black,line width=1.5pt] (8,26) -- (8,27);
\draw[color=black,line width=1.5pt] (17,26) -- (17,27);
\draw[color=blue,very thick] (33,27.5) -- (36,27.5);
\fill[color=blue] (34,27.1) -- (36,27.1) -- (36,27.9) -- (34,27.9) -- cycle;
\draw[color=black,line width=1.5pt] (33,27) -- (33,28);
\draw[color=black,line width=1.5pt] (36,27) -- (36,28);
\draw[color=red,very thick] (30,26.5) -- (39,26.5);
\fill[color=red] (30,26.1) -- (34,26.1) -- (34,26.9) -- (30,26.9) -- cycle;
\draw[color=black,line width=1.5pt] (30,26) -- (30,27);
\draw[color=black,line width=1.5pt] (39,26) -- (39,27);
\draw[color=blue,very thick] (53,27.5) -- (56,27.5);
\fill[color=blue] (54,27.1) -- (56,27.1) -- (56,27.9) -- (54,27.9) -- cycle;
\draw[color=black,line width=1.5pt] (53,27) -- (53,28);
\draw[color=black,line width=1.5pt] (56,27) -- (56,28);
\draw[color=blue,very thick] (52,26.5) -- (61,26.5);
\fill[color=blue] (52,26.1) -- (54,26.1) -- (54,26.9) -- (52,26.9) -- cycle;
\draw[color=black,line width=1.5pt] (52,26) -- (52,27);
\draw[color=black,line width=1.5pt] (61,26) -- (61,27);
\draw[color=blue,very thick] (77,27.5) -- (80,27.5);
\fill[color=blue] (77,27.1) -- (79,27.1) -- (79,27.9) -- (77,27.9) -- cycle;
\draw[color=black,line width=1.5pt] (77,27) -- (77,28);
\draw[color=black,line width=1.5pt] (80,27) -- (80,28);
\draw[color=red,very thick] (74,26.5) -- (83,26.5);
\fill[color=red] (79,26.1) -- (83,26.1) -- (83,26.9) -- (79,26.9) -- cycle;
\draw[color=black,line width=1.5pt] (74,26) -- (74,27);
\draw[color=black,line width=1.5pt] (83,26) -- (83,27);

% v1
\draw[color=red] (8,25.5) -- (52,25.5);
\draw[color=blue] (8,24.5) -- (74,24.5);
\draw[color=red,line width=1.5pt] (8,25.5) -- (16,25.5);
\fill[color=red] (11,25.1) -- (15,25.1) -- (15,25.9) -- (11,25.9) -- cycle;
\draw[color=black,line width=1.5pt] (15,25.000000) -- (15,26.000000);
\draw[color=black,line width=1.5pt] (16,25.000000) -- (16,26.000000);
\draw[color=blue,line width=1.5pt] (30,24.5) -- (38,24.5);
\fill[color=blue] (36,24.1) -- (38,24.1) -- (38,24.9) -- (36,24.9) -- cycle;
\draw[color=black,line width=1.5pt] (32,24.000000) -- (32,25.000000);
\draw[color=black,line width=1.5pt] (38,24.000000) -- (38,25.000000);
\draw[color=black,dashed] (16,25.5)--(30,24.5);
% v2
\draw[color=red,line width=1.5pt] (52,25.5) -- (60,25.5);
\fill[color=red] (56,25.1) -- (60,25.1) -- (60,25.9) -- (56,25.9) -- cycle;
\draw[color=black,line width=1.5pt] (59,25.000000) -- (59,26.000000);
\draw[color=black,line width=1.5pt] (60,25.000000) -- (60,26.000000);
\draw[color=blue,line width=1.5pt] (74,24.5) -- (82,24.5);
\fill[color=blue] (74,24.1) -- (76,24.1) -- (76,24.9) -- (74,24.9) -- cycle;
\draw[color=black,line width=1.5pt] (76,24.000000) -- (76,25.000000);
\draw[color=black,line width=1.5pt] (82,24.000000) -- (82,25.000000);
\draw[color=black,dashed] (60,25.5)--(74,24.5);

% C1
\draw[color=red] (8,22.5) -- (83,22.5);
\draw[color=blue] (8,21.5) -- (83,21.5);
\draw[color=red,line width=1.5pt] (17,22.5) -- (23,22.5);
\fill[color=red] (19,22.1) -- (23,22.1) -- (23,22.9) -- (19,22.9) -- cycle;
\draw[color=black,line width=1.5pt] (22,22.000000) -- (22,23.000000);
\draw[color=black,line width=1.5pt] (23,22.000000) -- (23,23.000000);
\draw[color=blue,line width=1.5pt] (17,21.5) -- (22,21.5);
\fill[color=blue] (17,21.1) -- (19,21.1) -- (19,21.9) -- (17,21.9) -- cycle;
\draw[color=black,line width=1.5pt] (22,21.000000) -- (22,22.000000);
%\draw[color=black,line width=1.5pt] (23,19.000000) -- (23,20.000000);
\draw[color=black,dashed] (23,22.5)--(38,21.5);

\draw[color=blue,line width=1.5pt] (38,21.5) -- (45,21.5);
\fill[color=blue] (38,21.1) -- (40,21.1) -- (40,21.9) -- (38,21.9) -- cycle;
\draw[color=black,line width=1.5pt] (43,21.000000) -- (43,22.000000);
\draw[color=black,line width=1.5pt] (45,21.000000) -- (45,22.000000);
\draw[color=red,line width=1.5pt] (38,22.5) -- (45,22.5);
\fill[color=red] (40,22.1) -- (44,22.1) -- (44,22.9) -- (40,22.9) -- cycle;
\draw[color=black,line width=1.5pt] (43,22.000000) -- (43,23.000000);
\draw[color=black,line width=1.5pt] (45,22.000000) -- (45,23.000000);
\draw[color=black,dashed] (45,22.5)--(61,21.5);

\draw[color=red,line width=1.5pt] (60,22.5) -- (67,22.5);
\fill[color=red] (60,22.1) -- (64,22.1) -- (64,22.9) -- (60,22.9) -- cycle;
\draw[color=black,line width=1.5pt] (66,22.000000) -- (66,23.000000);
\draw[color=black,line width=1.5pt] (67,22.000000) -- (67,23.000000);
\draw[color=blue,line width=1.5pt] (60,21.5) -- (67,21.5);
\fill[color=blue] (64,21.1) -- (66,21.1) -- (66,21.9) -- (64,21.9) -- cycle;
\draw[color=black,line width=1.5pt] (66,21.000000) -- (66,22.000000);
\draw[color=black,line width=1.5pt] (67,21.000000) -- (67,22.000000);
\draw[color=black,dashed] (67,22.5)--(82,21.5);

\draw[color=blue,line width=1.5pt] (83,21.5) -- (89,21.5);
\fill[color=blue] (87,21.1) -- (89,21.1) -- (89,21.9) -- (87,21.9) -- cycle;
\draw[color=black,line width=1.5pt] (87,21.000000) -- (87,22.000000);
\draw[color=black,line width=1.5pt] (89,21.000000) -- (89,22.000000);
\draw[color=red,line width=1.5pt] (83,22.5) -- (87,22.5);
\fill[color=red] (83,22.1) -- (87,22.1) -- (87,22.9) -- (83,22.9) -- cycle;
\draw[color=black,line width=1.5pt] (87,22.000000) -- (87,23.000000);
%\draw[color=black,line width=1.5pt] (89,12.000000) -- (89,13.000000);

% C2
\draw[color=red] (8,19.5) -- (89,19.5);
\draw[color=blue] (8,18.5) -- (89,18.5);
\draw[color=blue,line width=1.5pt] (23,18.5) -- (27,18.5);
\fill[color=blue] (23,18.1) -- (25,18.1) -- (25,18.9) -- (23,18.9) -- cycle;
\draw[color=black,line width=1.5pt] (27,18.000000) -- (27,19.000000);
%\draw[color=black,line width=1.5pt] (29,8.000000) -- (29,9.000000);
\draw[color=red,line width=1.5pt] (23,19.5) -- (29,19.5);
\fill[color=red] (25,19.1) -- (29,19.1) -- (29,19.9) -- (25,19.9) -- cycle;
\draw[color=black,line width=1.5pt] (27,19.000000) -- (27,20.000000);
\draw[color=black,line width=1.5pt] (29,19.000000) -- (29,20.000000);
\draw[color=black,dashed] (29,19.5)--(44,18.5);

\draw[color=blue,line width=1.5pt] (44,18.5) -- (51,18.5);
\fill[color=blue] (48,18.1) -- (50,18.1) -- (50,18.9) -- (48,18.9) -- cycle;
\draw[color=black,line width=1.5pt] (50,18.000000) -- (50,19.000000);
\draw[color=black,line width=1.5pt] (51,18.000000) -- (51,19.000000);
\draw[color=red,line width=1.5pt] (44,19.5) -- (51,19.5);
\fill[color=red] (44,19.1) -- (48,19.1) -- (48,19.9) -- (44,19.9) -- cycle;
\draw[color=black,line width=1.5pt] (50,19.000000) -- (50,20.000000);
\draw[color=black,line width=1.5pt] (51,19.000000) -- (51,20.000000);
\draw[color=black,dashed] (51,19.5)--(67,18.5);

\draw[color=blue,line width=1.5pt] (66,18.5) -- (73,18.5);
\fill[color=blue] (70,18.1) -- (72,18.1) -- (72,18.9) -- (70,18.9) -- cycle;
\draw[color=black,line width=1.5pt] (72,18.000000) -- (72,19.000000);
\draw[color=black,line width=1.5pt] (73,18.000000) -- (73,19.000000);
\draw[color=red,line width=1.5pt] (66,19.5) -- (73,19.5);
\fill[color=red] (66,19.1) -- (70,19.1) -- (70,19.9) -- (66,19.9) -- cycle;
\draw[color=black,line width=1.5pt] (72,19.000000) -- (72,20.000000);
\draw[color=black,line width=1.5pt] (73,19.000000) -- (73,20.000000);
\draw[color=black,dashed] (73,19.5)--(88,18.5);

\draw[color=blue,line width=1.5pt] (89,18.5) -- (95,18.5);
\fill[color=blue] (93,18.1) -- (95,18.1) -- (95,18.9) -- (93,18.9) -- cycle;
\draw[color=black,line width=1.5pt] (93,18.000000) -- (93,19.000000);
\draw[color=black,line width=1.5pt] (95,18.000000) -- (95,19.000000);
\draw[color=red,line width=1.5pt] (89,19.5) -- (93,19.5);
\fill[color=red] (89,19.1) -- (93,19.1) -- (93,19.9) -- (89,19.9) -- cycle;
\draw[color=black,line width=1.5pt] (93,19.000000) -- (93,20.000000);
%\draw[color=black,line width=1.5pt] (95,1.000000) -- (95,2.000000);
%\draw[color=black] (0.000000,1.500000) -- (-1.000000,1.500000) -- (-1.000000,0.500000) -- (0.000000,0.500000);

\fill[color=black] (29,18) -- (29,28) -- (30,28) -- (30,18) -- cycle;
\fill[color=black] (51,18) -- (51,28) -- (52,28) -- (52,18) -- cycle;
\fill[color=black] (73,18) -- (73,28) -- (74,28) -- (74,18) -- cycle;
\fill[color=black] (95,18) -- (95,28) -- (96,28) -- (96,18) -- cycle;

\end{tikzpicture}
}\caption{\label{fig:reductie-assignment-x1false-x2true}
For the same auxiliary scheduling problem instance as in Figure~\ref{fig:helereductie}, this shows a feasible schedule corresponding to setting $x_1=\textsc{false}$ and $x_2=\textsc{true}$.
We have marked in gray the space that is consumed by the jobs from the $\Lit{l}$ blocks.
}
\end{figure*}

\paragraph{Consistency of literals}
We now study the example in more detail.
The first observation, which is formalized in Proposition~\ref{prop:thegadgets}, is that the $\Lit{l}$ blocks have two possible feasible schedules, of which one when its pending job must meet the early deadline,
but this comes at the cost of introducing one unit of idle time which cannot be filled by other jobs.
Because $\Lit{x_i}$ and $\Lit{\neg x_i}$ are connected via the pending job pair $v_i$, one of the sections of $x_i$ and $\neg x_i$ contains a unit of idle time at the start (in any feasible schedule). This corresponds to setting $x_i$ to true or false (the section without idle time is the literal which is set to true). 
In the example in Figure~\ref{fig:helereductie} this can be verified by considering the four first lines and the two left-most sections, containing $V^+$ and $V^-$ for $x_1$ and $\neg x_1$, respectively, and similarly for $x_2$ in the third and fourth section.

\paragraph{Definition of clause blocks}
For $\Cactive$ blocks, the early deadline of 6 allows both pending jobs of length 2 and 4 to finish early if there is no idle time in the $\Lit{l}$ block;
if there is idle time, one job can complete early, and the other late.
For $\Cinactive$ blocks, the early deadline of 5 allows one job to finish early even with one unit of idle time, and the other will be late.
As noted before, the $\Cl{l}{j}$ are defined such that they are sensitive to the unit of delay at the start only if $l \in C_j$. In other words, the $\Cl{l}{j}$ are sensitive to whether $l$ is set to true only if $l \in C_j$, and thus $\Cl{l}{j}$ is of type $\Cactive$.

\begin{comment}
In the example, we label the pending job pairs as follows:
$v_1$ and $v_2$ for the two pairs for the variables $x_1$ and $x_2$;
$c_{1,1}, c_{1,2}, c_{1,3}$ for the three pending job pairs for clause $C_1$,
and $c_{2,1}, c_{2,2}, c_{2,3}$ for the three pending job pairs for clause $C_2$.
The long pending job and the short pending job in pair $j$ are labeled $j^P$ and $j^Q$, respectively.
Furthermore we have pending jobs which are not part of a pair.
These are actually ordinary jobs but it is more convenient to define them as pending jobs as the $\Cl{l}{j}$ blocks otherwise have to be changed for literals $x_1$ and $\neg{x}_n$.
In our example we label these pending jobs by $c_{1,0}^Q, c_{2,0}^Q$ for the short jobs in $\Cl{x_1}{1}, \Cl{x_1}{2}$ and $c_{1,4}^P, c_{2,4}^P$ for the long jobs in $\Cl{\neg{x}_2}{1}, \Cl{\neg{x}_2}{2}$.
These pending jobs not part of a pair must complete by their early deadline.
\end{comment}

\paragraph{Verifying the encoding of clauses by the clause blocks}
We now look at how the pending jobs for clause $C_1 = (x_1 \vee x_2)$ encode that $C_1$ must be satisfied.
We are going to schedule the $\Cl{l}{1}$ blocks.
We say that we schedule in the order \PQ (respectively \QP) if we put the long job (respectively the short job) first in $\Cl{l}{1}$.

To show that an unsatisfying assignment cannot result in a feasible schedule, consider the assignment $x_1=x_2=\textsc{false}$, corresponding to scheduling the pending jobs in~$\Lit{x_1}$ and~$\Lit{x_2}$ early.
The scheduling instance corresponding to this assignment is shown in Figure~\ref{fig:reductie-assignment-allebeifalse}.
Here we have already fixed a schedule to the~$\Lit{l}$ blocks.
We schedule the sections from left to right.
For $x_1$, $\Cl{x_1}{1} = \Cactive$ and the relative deadlines are 6 and 7. Because~$\Lit{x_1}$ completes one unit of time late, one job can complete early, and this must be the job with only an early deadline.
Therefore we put the jobs in the order \QP and we have to schedule the short job in~$\Cl{\neg x_1}{1}$ early.
For $\neg x_1$, $\Cl{\neg x_1}{1} = \Cinactive$ and the relative early deadline is 5, therefore we must put the jobs in the order \QP and we have to schedule the short job in~$\Cl{x_2}{1}$ early.
For $x_2$, $\Cl{x_2}{1} = \Cactive$ and the relative early deadline is 6, but we must take into account the unit of delay from the $\Lit{x_2}$ block's schedule, so again we must put the jobs in the order \QP
and the long job will be scheduled late, so the short job in $\Cl{\neg x_2}{1}$ must be scheduled early.
However in the section of $\neg x_2$, $\Cl{\neg x_2}{1} = \Cinactive$ and both the short pending job and the unpaired long job must be scheduled early, which is impossible.

This argument shows that if all $\Cactive$ blocks of a clause are in sections where the pending job in $\Lit{l}$ completes early, then in all sections we must schedule the $\Cl{l}{j}$ block in the order \QP, but this doesn't work for the last section, so that this condition means that a schedule does not exist at all.

If instead we consider $x_1=\textsc{false},x_2=\textsc{true}$, that is, having the idle time in the first and fourth section,
%scheduling $v_1^Q$ and $v_2^P$ early,
then it can be verified that scheduling the $\Cl{l}{1}$ blocks from left to right in the order (\QP, \QP, \PQ, \PQ) or (\QP, \QP, \QP, \PQ) leads to a feasible schedule (see Figure~\ref{fig:reductie-assignment-x1false-x2true} for the former).

%Suppose that $v^P_1$ and $v^P_2$ complete early (corresponding to $x_1$ and $x_2$ being set to false).
%Then, $c^P_{1,1}$ finishes late, either $c^Q_{1,1}$ or $c^P_{1,2}$ finishes late, either $c^Q_{1,2}$ or $c^P_{1,3}$ finishes late, and $c^Q_{1,3}$ finishes late. So a total of 4 pending jobs in 3 pairs must finish late, which is impossible.
%If instead $v^P_1$ completes late, then we schedule $c^Q_{1,1}, c^Q_{1,2}, c^Q_{1,3}$ late and the others early, which satisfies the conditions.

In general, the pending jobs of a clause $C_j$ can be feasibly scheduled if an $\Cactive$ block in a section for literal $l\in C_j$ has the related pending job in $\Lit{l}$ finished after the early deadline.
This is formalized by Proposition~\ref{prop:blokdetail} in the next section.
A formula then is satisfiable if each clause can be feasibly scheduled in this manner.

\subsection{General reduction to $\operatorname{AUX}(p,q)$}
In this section we describe the reduction more formally, but follow the same steps.
Let a propositional formula be given in conjunctive normal form with $n$ variables and $m$ clauses.

At a high level, the instance of the $\operatorname{AUX}(p,q)$ scheduling problem consists of $2n$ sections, one per literal.
Each section has the same length $S$.
In each section we define a set of jobs with deadlines in $[0, S]$.
Then, we concatenate the sections by shifting the release times and deadlines.
The total sum of job lengths in each section is $S-1$, and the last time units $[S-q,S)$ are occupied by a separator job.
Before the separator, sections contain $m+1$ \emph{blocks} of jobs. The first block represents the literal and the others the $m$ clauses: each clause is represented in each literal's section, regardless of whether the clause contains the literal.
For literal $l$, we denote these blocks by $\Lit{l}$ and $\Cl{l}{j}$.
The four types of blocks we use are defined as follows.

\begin{definition}[Blocks]\label{def:blocks-def}
We use four types of blocks, $V^+$, $V^-$, $\Cactive$ and $\Cinactive$:
\begin{enumerate}
\item
The $V^+$ block, used as $\Lit{l}$ for positive literals $x_i$,
defines a long pending job with deadlines~$(d'_p,d_p)=(p+q+1,p+2q)$ and two ordinary jobs, $([1,2q],q)$ and $([0,p+2q+1],q)$.
\item
The $V^-$ block, used as $\Lit{l}$ for negative literals $\neg{x}_i$,
defines a short pending job with two deadlines~$(d'_q,d_q)=(q,p+2q)$ and two ordinary jobs, $([q+1,p+q],q)$ and $([0,p+2q+1],p)$.
\item The $\Cactive$ block, used to represent clauses containing a literal,
defines two pending jobs with deadlines $(d'_p,d_p)=(d'_q,d_q)=(p+q,p+q+1)$.
\item The $\Cinactive$ block, used to represent clauses not containing a literal,
defines two pending jobs with deadlines $(d'_p,d_p)=(d'_q,d_q)=(p+q-1,p+q+1)$.
\end{enumerate}
\end{definition}
These blocks are illustrated by Figure~\ref{fig:hetplaatje}.

When we say that we \emph{shift} a set of jobs~$J$ by a time offset $\Delta t$, we mean that we add $\Delta t$ to the release times and deadlines of ordinary jobs in~$J$, and to the deadlines of pending jobs in~$J$.
The reduction is defined as follows.
\begin{definition}[Reduction]\label{lbl:reductie-def}
Given an instance of the Satisfiability problem with $n$ variables $x_1,x_2,\ldots,x_n$ and $m$ clauses $C_1, C_2, \ldots, C_m$,
we construct an instance of $\operatorname{AUX}(p,q)$ with $n + 2nm$ jobs for each of the two different job lengths using the blocks as follows.
For each literal~$l$, we define a section, consisting of $\Lit{l}=V^+$ for positive literals $l=x_i$ and $\Lit{l}=V^-$ for negative literals $l=\neg{x}_i$, and $\Cl{l}{j}=\Cactive$ if $l \in C_j$ and $\Cl{l}{j}=\Cinactive$ otherwise.
Let~$S = p+2q+m(p+q)+1+q$.
The section for literal~$l$ consists of the following jobs:
\begin{itemize}
\item $\Lit{l}$.
\item For $j=1,\ldots,m$, $\Cl{l}{j}$ shifted by $p+2q+(j-1)(p+q)$ time units.
\item A separator job occupying $[S-q,S)$.
\end{itemize}
The complete instance is the concatenation of these sections for all literals in the order $x_1,\neg{x}_1, \ldots, x_n,\neg{x}_n$,
where each section is shifted $S$ time units further than its predecessor.

For each variable $x_i$, the long pending job in $\Lit{x_i}=V^+$ is paired with the short pending job in $\Lit{\neg{x_i}}=V^-$.
For each $i=1,\ldots,n$, for each clause $j=1,\ldots,m$, the long pending job in $\Cl{x_i}{j}$ is paired with the short pending job in $\Cl{\neg{x_i}}{j}$, and the long pending job in $\Cl{\neg{x_i}}{j}$ is paired with the short pending job in $\Cl{x_{i+1}}{j}$, except for $i=n$.
These last $m$ long pending jobs as well as the first $m$ short pending jobs remain unpaired, and, additionally, are required to complete by their early deadline, effectively being ordinary jobs, and thus leaving $N = n + (2n-1)m$ pairs of connected pending jobs.
\end{definition}

The above definition follows the earlier discussion on the literal and clause blocks.
An instance of $\operatorname{AUX}(p,q)$ as per Definition~\ref{defn:model} can be obtained by separating the long pending jobs, the short pending jobs, and the ordinary jobs into the sets $J_p$, $J_q$ and $J$.
The resulting instance has $O(n m)$ jobs in total, which is polynomial in the size of the SAT formula, and which does not depend on $p$ or $q$.

The connected pairs of pending jobs between blocks are as follows.
The pending jobs in $\Lit{x_i}$ and $\Lit{\neg x_i}$ are connected for all $1\leq i\leq n$.
Further, for each $1\leq j\leq m$ and $1\leq k\leq 2n-1$, the long pending job in $\Cl{\idxtolit{k}}{j}$ and the short pending job in $\Cl{\idxtolit{k+1}}{j}$ are connected.

\subsection{Reduction proof}
To prove that any solution to satisfiability can be translated to a solution for the auxiliary scheduling problem, and vice versa, we first derive a few properties of the defined schedule.

\begin{proposition}\label{prop:thegadgets}
For~$\Lit{l}$, two schedules are possible:
one without idle time and completion time $p+2q$,
and one with one unit of idle time and completion time $p+2q+1$;
only in the latter schedule the pending job can complete early.
\end{proposition}
\begin{proof}
We consider both cases~$\Lit{l}=V^+$ and $\Lit{l}=V^-$.
The blocks are also visualized in Figure~\ref{fig:hetplaatje}.

For~$V^+$:
because of the short job with interval $[1,2q]$, we must start with a short job.
If we start with the $[1,2q]$ job, it must start at time 1 because
the last deadline is~$p+2q+1$ and we already have one idle unit.
So the long pending job is scheduled at $[q+1,p+q+1)$ and the other short job at $[p+q+1,p+2q+1)$.
Otherwise, two short jobs must occupy $[0,2q)$ and the pending job occupies $[2q,p+2q)$.
Only the first schedule meets the pending job's early deadline.

For~$V^-$:
if we schedule the short pending job first on~$[0,q)$ or $[1,q+1)$,
the other two occupy $[q+1,p+2q+1)$.
Otherwise, we must start with the long job, and because of the other short job,
the two ordinary jobs occupy $[0,p+q)$, and the pending job occupies $[p+q,p+2q)$.
Again, only in the first schedule the pending job can meet its early deadline.
\end{proof}

\begin{lemma}[SAT model $\Rightarrow$ feasible schedule]\label{lem:dir1}
Let $v(x_1),v(x_2),\ldots,v(x_n)$ be a model for the SAT formula.
Then the constructed scheduling instance has a feasible schedule.
\end{lemma}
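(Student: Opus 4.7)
My plan is to build a feasible schedule directly from the SAT model~$v$, exploiting the three types of layout choices the construction affords. First, the dummy blocks and separators sit rigidly at their unique feasible positions, with the single pending job of each dummy meeting only its late deadline. Second, each literal block $V^+$/$V^-$ admits an ``early-deadline'' layout which overflows its allocated interval by one time unit and a ``non-overflow'' layout which fits exactly; I handle the pending pair~$v_i$ by choosing $V^+$'s early layout and $V^-$'s non-overflow layout when $v(x_i)=\text{false}$, and the opposite when $v(x_i)=\text{true}$, so that in either case one of long~$v_i$ or short~$v_i$ meets its early deadline.

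Because $p,q>1$, the unit of idle time produced by an overflowing literal block cannot be absorbed by later blocks (by Lemma~\ref{lem:surrogaat}), so the first clause block of the same literal sequence is forced to begin one unit late; once shifted, a clause block (total work~$p+q$) itself overflows by one to fit into $[\operatorname{ofs}+1,\operatorname{ofs}+p+q+1]$, and the shift propagates through all~$m$ clause blocks before being absorbed by the separator. Hence sequence~$L_l$ is shifted exactly when literal~$l$ is false under~$v$. For the clause blocks I use a \emph{pivot strategy}: for each clause~$C_j$ pick a satisfying literal $l^*_j\in C_j$, so that the clause block at position~$j$ of sequence~$L_{l^*_j}$ is a non-shifted $C_{active}$; schedule this block so that both its long and short pending jobs meet their early deadlines, covering pending pairs $(j,l^*_j-1)$ and $(j,l^*_j)$. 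For every other clause block~$(L_l,j)$ with $l\neq l^*_j$---which is either shifted or a non-shifted $C_{inactive}$, so that exactly one of long/short can meet early and I control which by the placement order---I schedule the short job first if $l<l^*_j$ and the long job first if $l>l^*_j$.

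A short case analysis then verifies that every pending pair~$(j,k)$ for $0\leq k\leq 2n$ is covered: the pairs $(j,l^*_j-1)$ and $(j,l^*_j)$ by auto-satisfaction at the pivot, the pairs with $k<l^*_j-1$ by the short-first choice in $(L_{k+1},j)$, the pairs with $k>l^*_j$ by the long-first choice in~$(L_k,j)$, and the two boundary pairs $(j,0)$ and $(j,2n)$---whose dummy counterparts inherently fail their early deadlines---by the same scheme applied at~$L_1$ and~$L_{2n}$ respectively (short-first in $L_1$ when $l^*_j>1$, long-first in $L_{2n}$ when $l^*_j<2n$, and the pivot itself when $l^*_j\in\{1,2n\}$). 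The main technical obstacle I anticipate is justifying the shift-propagation claim rigorously: one must show that the only layouts of $V^+$/$V^-$ which meet the early deadline have overflow \emph{exactly} one (ruling out cleverer layouts that avoid overflow or shift by more), and that a shifted clause block is likewise forced to overflow by exactly one to stay within its late deadline. Once this is in place, the feasibility of each individual block layout and the combinatorial pending-pair coverage via the pivot strategy are both routine.
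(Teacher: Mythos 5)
Your construction is essentially the paper's own proof: dummy blocks at their offsets, literal blocks laid out according to $v(x_i)$ so that exactly one of the pair $v_i$ meets its early deadline, the clause blocks of a delayed literal sequence shifted by one unit, and the pivot (satisfying-literal) strategy with short-first before and long-first after the pivot, giving the same pending-pair coverage. The ``forcing'' worry you raise (that overflow must be \emph{exactly} one) is only needed for the converse lemma; for this direction you merely exhibit a schedule, and the one you describe is feasible since the first job of any (possibly shifted) clause block finishes by $p+1$ or $q+1\leq p+q-1$ and the second by $p+q+1$, so no gap remains.
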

\begin{proof}
We specify the schedules for the sections and then concatenate these.
%We say that a pending job is \emph{late}, if we allow it to complete after its early deadline.
The idea is to schedule all jobs in the order of their blocks $\Lit{l}, \Cl{l}{1}, \ldots, \Cl{l}{m}$.
%The only choice to be made is which of $v^P_i$ or $v^Q_i$ we require to finish by its early deadline and in which order to schedule the two jobs in $\Cl{l}{j}$.
For all $1\leq i\leq n$, if $v(x_i)=\textsc{true}$, require the pending job in $\Lit{\neg x_i}$ to finish early;
if $v(x_i)=\textsc{false}$, require the pending job in $\Lit{x_i}$ to finish early.
By Proposition~\ref{prop:thegadgets}, this allows us to schedule $\Lit{l}$ for literals set to true without idle time.
For clauses containing such literals, $\Cl{l}{j} = \Cactive$ and we can schedule both pending jobs in $\Cl{l}{j}$ early.
For all other $\Cl{l}{j}$, we can schedule exactly one pending job early.
For each clause $C_j$, let $l=\idxtolit{k} \in C_j$ be a literal set to true by~$v$ and $k$ be its index.
For $k' < k$, schedule the short job before the long one; for $k' = k$ the order does not matter, and for $k' > k$ schedule the long job before the short one.
Observe that because $l$ is set to \textsc{true}, it is possible to schedule both jobs in $\Cl{l}{j}$ early.
So we schedule the short jobs in $\Cl{x_1}{j}, \ldots, \Cl{l}{j}$ and the long jobs in $\Cl{l}{j}, \ldots, \Cl{\neg x_n}{j}$ early.
This satisfies the conditions for the pending jobs.
\end{proof}

Now we derive a few properties of the constructed scheduling instance to prove the other direction.
\begin{proposition}\label{prop:blk}
In any feasible schedule, all jobs (except the separator jobs) must have a start time after the time offset to which their block was shifted.
\end{proposition}
\begin{proof}
Because the jobs in each section have length $S-1$, each block ends with a separator job, and all job lengths are greater than one,
it follows that all jobs must be scheduled within their own section.
For a given section of literal~$l$, the claim clearly holds for the initial block~$\Lit{l}$.
For the $\Cl{l}{j}$ blocks, note that $\Cl{l}{j}$ is shifted by $p+2q+(j-1)(p+q)$, which equals the total job length of $\Lit{l}$ and all $\Cl{l}{j'}$ with $j' < j$.
Because jobs in both $\Lit{l}$ and $\Cl{l}{j}$ have deadlines exceeding the block's total job length by at most one, and $p, q > 1$,
the jobs in $\Cl{l}{j}$ must be scheduled after those in $\Lit{l}$ and $\Cl{l}{j'}$ for $j' < j$,
so they can start at time equal to their offset or later.
\end{proof}

To proof that not satisfying a clause implies an infeasible schedule, we use the following fact on the relation between the pending job defined in a literal and the corresponding clause jobs.

\begin{proposition}\label{prop:blokdetail}
%Let~$l = \idxtolit{k}$ be a literal and~$k$ be its index.
Consider any clause $C_j$. In any feasible schedule, there must exist a literal $l \in C_j$ such that the pending job in $\Lit{l}$ completes after its early deadline.
\end{proposition}
\begin{proof}
We argue by contradiction.
Suppose that for all literals~$l \in C_j$, the pending job in~$\Lit{l}$ completes early.
By Proposition~\ref{prop:blk}, the jobs of each block are scheduled no earlier than the offset of their block.
So we can apply Proposition~\ref{prop:thegadgets} to the $\Lit{l}$ blocks.
For literals~$l \in C_j$, the $\Lit{l}$ blocks are scheduled with one unit of idle time.
This unit of delay propagates through the $\Cl{l}{j}$ blocks.
So for $l \in C_j$, we can schedule the jobs in $\Cl{l}{j}$ starting 1 time unit later than the block's offset.
Because the early deadlines in $\Cactive$ are $p+q$, we have that for $l\in C_j$, one pending job in~$\Cl{l}{j}$ completes late.
For literals~$l \notin C_j$, we can schedule the jobs in $\Cl{l}{j}$ starting at the block's offset (or later), and the early deadlines are $p+q-1$, so also for these literals, one pending job in~$\Cl{l}{j}$ completes late.
Therefore, $2n$ jobs complete late among the jobs in $\Cl{x_1}{j}, \ldots, \Cl{\neg x_n}{j}$.
The unpaired pending jobs must be early, and there are $2n-1$ pending job pairs contained in these blocks, so the number of pending jobs that will be late ($2n$) exceeds the number of pending jobs that can be late ($2n-1$), which is a contradiction.
\end{proof}

\begin{lemma}[Feasible schedule $\Rightarrow$ SAT model]\label{lem:dir2}
Suppose we have a feasible schedule for the constructed scheduling instance.
Then there exists a model for the SAT formula.
\end{lemma}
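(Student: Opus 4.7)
I would define the SAT assignment by setting $v(x_i)=\textsc{true}$ exactly when the $v_i$'th short pending job (inside the $V^-$ block for $x_i$) completes by its early deadline. The $v_i$ pair constraint then forces the $v_i$'th long pending job (inside $V^+$) to meet its early deadline whenever $v(x_i)=\textsc{false}$, so $v$ is well defined. The goal is to show that $v$ satisfies every clause $C_j$.

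I would first establish two structural facts. \emph{Local rigidity of the head blocks:} by a short case analysis on the position of the length-$q$ auxiliary job with availability window $[1,2q]$, the only way to fit all three jobs of $V^+$ inside an interval of length $L(V^+)=p+2q$ places the two aux jobs at $[0,q]$ and $[q,2q]$ and the long pending job at $[2q,p+2q]$, so that the long pending job misses its early deadline $p+q+1$. Consequently, if the $v_i$'th long pending job meets its early deadline then the last job of $V^+$ must overrun the block's allotted interval by exactly one time unit; an analogous statement holds for $V^-$. \emph{Monotonicity of delay:} by Lemma~\ref{lem:surrogaat} together with $p,q>1$, within any sequence of blocks bounded by separators at most one time unit of idle time is available, and once some block overruns its allotted interval by one unit every later block in the same sequence must also start one unit late; contrapositively, any block that is scheduled flush with its allotted interval forces every earlier block in its sequence to be as well.

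Next, for a fixed clause $C_j$ I would analyse its $2n$ clause blocks by a finite case split on block type ($C_{active}$ vs.\ $C_{inactive}$), whether the block is delayed by one unit, and which of the long and short pending job is scheduled first. The conclusion is that a clause block allows \emph{both} of its pending jobs to meet their early deadlines if and only if it is $C_{active}$ and undelayed; otherwise exactly one does, and which one is determined by the job order. The dummy blocks at the two ends of the construction force the short pending job at clause-block position $1$ and the long pending job at clause-block position $2n$ to meet their early deadlines, and the pair constraints for the intermediate indices $c_{j,1},\ldots,c_{j,2n-1}$ link these requirements into a chain. A left-to-right induction then shows that this chain cannot be satisfied unless some clause block $l^\ast$ is simultaneously $C_{active}$ and undelayed: otherwise, the forcing at position $1$ would propagate the ``short-meets-early'' pattern all the way through to position $2n$, contradicting the forced ``long-meets-early'' there.

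Finally I would read off satisfaction. By delay monotonicity, block $l^\ast$ being undelayed implies that the $V^+$ or $V^-$ block at the head of its sequence is also undelayed, so by local rigidity the pending job defined in that head block misses its early deadline. The paired $V^+/V^-$ block for the same variable must therefore be delayed, and tracing through the definition of $v$ shows that literal $l^\ast$ is true under $v$; since the block is $C_{active}$, literal $l^\ast$ occurs in $C_j$, and so $C_j$ is satisfied. The main obstacle I anticipate is the clause-block case analysis: for each of the four combinations of type and delay and for both job orderings one must compute end times and verify inequalities such as $p+1\leq p+q-1$ and $q+1\leq p+q-1$, which is precisely where the hypothesis $p>q>1$ is needed.
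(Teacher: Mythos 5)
Your proposal is correct and follows essentially the same route as the paper: the same truth assignment (up to tie-breaking), the rigidity of the $V^+$/$V^-$ blocks forcing a one-unit overrun when their pending job meets its early deadline, the propagation of that delay through the clause blocks up to the next separator (via Lemma~\ref{lem:surrogaat}), the dummy blocks' inability to meet their early deadlines, and the fact that only an undelayed $C_{active}$ block can let both of its pending jobs finish early. The only difference is presentational: you close with a left-to-right induction along the chain of pairs $c_{j,0},\ldots,c_{j,2n}$, while the paper expresses the same pigeonhole as a count ($2n+1$ required early completions versus at most $2n$ available) to contradict a falsified clause.
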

\begin{proof}
For all $1\leq i\leq n$, if the pending job in~$\Lit{x_i}$ completes late, set $v(x_i)=\textsc{true}$;
if the pending job in~$\Lit{\neg x_i}$ completes late, set $v(x_i)=\textsc{false}$ (because the pending jobs are paired, not both can complete late).
If neither completes late, set $v(x_i)$ arbitrarily.

We show that this assignment satisfies all clauses.
Consider a clause $C_j$.
By Proposition~\ref{prop:blokdetail}, there exists a literal~$l \in C_j$ such that the pending job in~$\Lit{l}$ completes after its early deadline.
But then, this literal was set to true by the assignment we have defined, so the clause is satisfied.
\end{proof}

Since the reduction is polynomial, Lemmas~\ref{lem:dir1} and~\ref{lem:dir2} together prove the main Lemma~\ref{lem:partB}, that is, that $\operatorname{AUX}(p,q)$ is strongly NP-complete for $p > q > 1$.
Together with the reducibility of $\operatorname{AUX}(p,q)$ to our original problem (Lemma~\ref{lem:partA}), we establish the main result of this paper (Theorem~\ref{thm:mainresult}), which is that the original scheduling problem with job lengths $p > q > 1$ is strongly NP-complete.

\section{Discussion}
Proving NP-completeness of the two job lengths problem turned out to be much more complex than for the unrestricted problem where we can simply reduce from 3-Partition.
Since the reduction proof in this paper is valid and polynomial for constant length $p > q > 1$, it shows that the non-preemptive job scheduling problem with release times and deadlines and such job lengths $p$ and $q$ is \emph{strongly} NP-complete.
A direct consequence of our result is that the case~$\{1,p\}$ is the maximally theoretically solvable case.

Our result contradicts the existence of a pseudo-polynomial algorithm for the case $\{q,2q\}$ as described by \cite{DBLP:journals/anor/Vakhania04}.
One might think that when the long job length is a multiple of the short one, the problem can 
 be translated to the~$\{1,p\}$ problem, but this is not possible.
Intuitively, with two non-unit job lengths we are able to force the schedule to contain
idle time between the execution intervals of two consecutive jobs, while
for $\{1,p\}$ unit length jobs can always be inserted into idle intervals of unit length.

Another way to think about the~$\{1,p\}$ problem is that the unit length jobs are actually preemptive jobs:
the linear programming formulation of~\cite{DBLP:conf/esa/Sgall12} only works because the long jobs never need to preempt the unit length jobs.
Therefore, we can consider a generalized problem with both preemptive and non-preemptive jobs and arbitrary fractional release times and deadlines.
In terms of this problem, our result can be stated as follows: with respect to the set of job lengths, the only solvable case is the one in which
the non-preemptive jobs all have the same length.

\section{Acknowledgements}
The authors thank Sicco Verwer for useful discussions, Ji\v{r}i Sgall for valuable comments, Nodari Vakhania for valuable e-mail correspondence, and anonymous reviewers for valuable comments that improved the presentation.
%\end{acknowledgements}

\bibliographystyle{plainnat}
\bibliography{pub}

\end{document}